\newtheorem{theorem}{Theorem}[section]
\newtheorem{corollary}{Corollary}[theorem]
\newtheorem{lemma}[theorem]{Lemma}
\newtheorem{claim}[theorem]{Claim}
\newtheorem{definition}{Definition}[section]
\def\BibTeX{{\rm B\kern-.05em{\sc i\kern-.025em b}\kern-.08em
    T\kern-.1667em\lower.7ex\hbox{E}\kern-.125emX}}
\newcommand{\alessia}{}
\newcommand{\adnane}{}
\newcommand{\remove}[1]{}
\newtheorem{obs}{Observation}
\begin{document}
\title{Upper and Lower Bounds for Deterministic Approximate Objects
}

\author{\IEEEauthorblockN{Danny Hendler}
\IEEEauthorblockA{\textit{Ben-Gurion University of the Negev}\\
Be'er Sheva, Israel \\
hendlerd@bgu.ac.il}
\and
\IEEEauthorblockN{Adnane Khattabi}
\IEEEauthorblockA{\textit{LaBRI, University of Bordeaux}\\
Bordeaux, France \\
adnane.khattabi-riffi@u-bordeaux.fr}
\and
\IEEEauthorblockN{Alessia Milani}
\IEEEauthorblockA{\textit{LaBRI, Bordeaux INP}\\
Bordeaux, France \\
milani@labri.fr}
\and
\IEEEauthorblockN{Corentin Travers}
\IEEEauthorblockA{\textit{LaBRI, Bordeaux INP}\\
Bordeaux, France \\
travers@labri.fr}
}


%


\maketitle

\begin{abstract}
Relaxing the sequential specification of shared objects has been proposed as a promising approach to obtain implementations with better complexity. In this paper, we study the step complexity of relaxed variants of two common shared objects: max registers and counters. In particular, we consider the $k$-multiplicative-accurate max register and the $k$-multiplicative-accurate counter, where read operations are allowed to err by a multiplicative factor of $k$ (for some $k \in  \mathbb{N}$). More accurately, reads are allowed to return an approximate value $x$ of the maximum value $v$ previously written to the max register, or of the number $v$ of increments previously applied to the counter, respectively, such that $v/k \leq x \leq v \cdot k$. We provide upper and lower bounds on the complexity of implementing these objects in a wait-free manner in the shared memory model.
\end{abstract}

\begin{IEEEkeywords}
Distributed computing, distributed algorithms, shared memory, fault tolerance, concurrent data structures, relaxed specifications
\end{IEEEkeywords}

%
\IEEEpeerreviewmaketitle

\section{Introduction}
With the ubiquitousness of multi-core and multi-processor systems, there is a growing need to gain better understanding of how to implement concurrent objects with improved complexity, while maintaining the natural correctness guarantee provided to programmers by linearizability. Relaxing the sequential specification of linearizable concurrent objects is one promising approach of achieving this \cite{AfekKY10,HenzingerKPSS13}. An object's \emph{sequential specification} defines its correct behavior in sequential executions. Roughly speaking, \emph{linearizability} \cite{herlihy1990linearizability} guarantees that any concurrent execution is equivalent to a sequential one. 

There is empirical evidence that relaxing the sequential specification of some common objects, e.g. queues and counters, yields improved performance of linearizable implementations, e.g \cite{HenzingerKPSS13,RukundoAT19}. However, the theoretical principles to implement concurrent objects more efficiently by relaxing their sequential specification are not yet clear.

In this paper, we study relaxed-semantics variants of two well-known concurrent objects -- counters and max registers, in the classical shared memory model. In particular, we investigate the extent to which allowing wait-free linearizable implementations of these objects to return approximate values, rather than accurate ones, may improve their step complexity.

A counter is a linearizable object that supports a $\mathit{CounterIncrement}$ operation and a $\mathit{CounterRead}$ operation. The sequential specification of a counter requires that a $\mathit{CounterRead}$ operation returns the number of $\mathit{CounterIncrement}$ operations that precede it. A relaxed variant of the counter is the \emph{k-multiplicative-accurate} counter, defined in \cite{AspnesCAH16}, where a $\mathit{CounterRead}$ operation returns an approximate value $x$ of the number $v$ of $\mathit{CounterIncrement}$ operations that precede it, such that $v/k \leq x \leq v \cdot k$ for some parameter $k > 0$.

A max register $r$ supports a $Write(v)$ operation that writes a non-negative integer $v$ to $r$ and a $Read$ operation that returns the maximum value previously written to $r$, \cite{AspnesCAH16}. We define the $k$-multiplicative-accurate max register by allowing a $Read$ operation to return an approximate value $x$ of the largest value $v$ written before it, such that $v/k \leq x \leq v \cdot k$ for some parameter $k > 0$.

\subsection{Related Work}

A well-known result by Jayanti, Tan and Toueg \cite{JayantiKT2000} proved a linear lower bound in the number of processes $n$ on the worst-case step complexity of obstruction-free implementations from historyless primitives (e.g. read,write, test\&set) of a large class of shared objects that includes exact counters.  A wait-free exact counter with optimal worst case step complexity can be constructed easily by using a \emph{wait-free atomic snapshot} : to increment the counter, a process simply increments its component of the snapshot, and to read the counter's value, it invokes \texttt{Scan} and returns the sum of all components in the view it obtains. Since wait-free atomic snapshot can be implemented, using reads and writes only, with worst-case step complexity linear in $n$, e.g. \cite{attiya2001adaptive}, so can counters. 

Aspnes, Attiya and Censor-Hillel \cite{AspnesAC2012} show the possibility of implementing exact counting algorithms whose step complexity is sub-linear when the number of operations is bounded. In particular, they presented a wait-free exact counter for which the step complexities of $\mathit{CounterIncrement}$  and $\mathit{CounterRead}$ operations are $O\big(min(\log n \log v, n)\big)$ and $O\big(min(\log v, n)\big)$, respectively, where $v$ is the object's current value. 
However, for executions in which the number of $\mathit{CounterIncrement}$ operations is exponential in $n$, both the worst-case and the amortized step complexities of their constuction become linear in $n$.

In \cite{BaigHMT19} Baig et al. present the first wait-free read/write exact counter whose amortized step complexity is polylogarithmic in $n$, $O(\log^2n)$, in executions of arbitrary length and prove that their algorithm is optimal in terms of amortized step complexity up to a logarithmic factor.

Approximate counting has many applications (e.g.;\cite{DiceLM13,AfekKKMT12}) and there is a large literature in approximate probabilistic counting, both in the sequential (e.g.\cite{Morris78a,Flajolet85}) and concurrent setting (e.g.\cite{AspnesC10,BenderG11}). In \cite{AspnesC10}, Aspnes and Censor present a randomized algorithm to implement an approximate counter that requires sublinear step complexity, and where any $\mathit{CounterRead}$ operation has a \emph{high probability} of returning a value which is at most a fraction of $\delta$ less than the number of increments that have finished before the read started, and at most a fraction of $\delta$ more than the number of increments that have started before the read finished. Similarly to their counter, the $k$-multiplicative accurate counter we study in this paper, allows a multiplicative error of the exact value. However, we study approximate \emph{deterministic counting}, where the value returned by $\mathit{CounterRead}$ operations is always ensured to be within the given approximation range.

Aspnes et al. also study $k$-additive counters that allow some additive error for the value returned by $\mathit{CounterRead}$ operations, \cite{AspnesAC2012}. In particular they prove a lower bound of $\Omega(min(n-1,\log{m} - \log{k}))$ on the worst-case step complexity of any deterministic asynchronous linearizable implementation of a $k$-additive counter, where $m$ is the number of states of the counter and $n$ is the number of processes. No matching upper bound is given.

The exact max register object has been proposed in \cite{AspnesAC2012} where Aspnes et al. present a bounded variant of this object to beat the linear lower bound on the worst case step complexity by Jayanti et al. In particular, their algorithm has $O(\log m)$ worst case step complexity for both $\mathit{Read()}$ and $\mathit{Write(v)}$ operations provided that the value written in the max register does not exceed the value $m$.  Their algorithm is optimal \cite{AspnesCAH16}. Baig et al. \cite{BaigHMT19} presented an unbounded deterministic max register implementation with polylogarithmic \textit{amortized} step complexity in executions of arbitrary length. 



\subsection{Our Contribution}
 \subsubsection*{k-multiplicative-accurate counter}
%
%
To the best of our knowledge we present the first deterministic approximate counter with \emph{constant amortized complexity}. More precisely, we present a wait-free linearizable $k$-multiplicative-accurate counter for $k\geq \sqrt{n}$ where $n$ is the number of processes, with \emph{constant} amortized step complexity for executions of arbitrary length. Then, by extension of the lower bound of Attiya and Hendler, \cite{AttiyaH10}, we prove that any $n$-process solo-terminating implementation of a $k$-multiplicative-accurate counter from read/write and conditional primitive operations (including $k$-word compare-and-swap) has amortized step complexity of $\Omega(\log(n/k^2))$, for $k \leq \sqrt{n}/2$. 
Our results together with the upper and lower bound on exact counting proved in \cite{BaigHMT19} show that when the approximation parameter $k$ does not depend on $n$, relaxing the counter semantics by allowing a multiplicative error cannot asymptotically reduce the amortized step complexity by more than a logarithmic factor. 

We also prove that the \emph{worst-case step complexity} of obstruction-free implementations of $m$-bounded $k$-multiplicative-accurate counters from \emph{historyless} primitives is $\Omega(\min(n,\log_2{\log_k{m}}))$, where $n$ is the number of processes and $m$ is a bound on the number of $\mathit{CounterIncrement}$ operation instances that can be performed on the counter. This implies that for unbounded k-multiplicative-accurate counters, the worst case step complexity is in $\Omega(n)$, and we fall back to the linear lower bound by Jayanti, Tan and Toueg  \cite{JayantiKT2000}. 

\subsubsection*{$k$-multiplicative-accurate max register}
We prove that relaxing the semantics of the bounded max register by allowing inaccuracy of even a constant multiplicative factor yields an exponential improvement in the worst-case step complexity. In particular, we prove that the worst-case step complexity of obstruction-free read/write implementations of $m$-bounded $k$-multiplicative-accurate max registers is $\Omega(\min(n,\log_2{\log_k{m}}))$, where $n$ is the number of processes. A max register is $m$-bounded, if it can only represent values in $\{0,\ldots,m-1\}$. Then, we present a novel $m$-bounded $k$-multiplicative-accurate max register algorithm whose worst-case step complexity matches this lower bound. 
We can easily ``plug-in'' our bounded k-multiplicative-accurate max register into the construction proposed by Baig et al. \cite{BaigHMT19} to obtain an unbounded k-multiplicative-accurate max register with sub-logarithmic amortized step complexity \adnane{(omitted due to space constraints)}. 

\section{Model and Preliminaries}
\label{sec:preliminaries}

We consider an asynchronous shared memory system, where a set $\mathcal{P}$ of $n$ crash-prone processes communicate by applying operations to
shared objects. An \emph{object} is an instance of an abstract
data type. It is characterized by a domain of possible values and
by a set of \emph{operations} that provide the only means to
manipulate it.  An \emph{implementation} of a shared object provides a
specific data-representation for the object from a set
of shared \emph{base objects}, each of which is assigned an
initial value; the implementation also provides algorithms for each process
in $\mathcal{P}$ to apply each operation to the object being implemented.
To avoid confusion, we call operations on the base objects
\emph{primitives} and reserve the term \emph{operations} for the
objects being implemented.

An \emph{execution fragment} is a (finite or infinite) sequence of steps performed by processes as they follow their algorithms. In each step, a process applies at most a single primitive to a base object (possibly in addition to some local computation). We consider read, write and test\&set primitives. An \emph{execution} is an execution fragment that starts from the \emph{initial configuration}. This is a configuration in which all base objects have their initial values and all processes are in their initial states.

A set of primitives is \emph{historyless} if all the nontrivial primitives in the set overwrite each other; we also require that each such primitive overwrites itself. A primitive is \emph{nontrivial} if it may change the value of the base object to which it is applied.


Operation $op_1$ \emph{precedes} operation $op_2$ in an execution $E$, if $op_1$'s response appears in $E$ before $op_2$'s invocation. We consider only \emph{deterministic implementations}, in which the next step taken by a process depends only on its state and the response it receives from the event it applies.

Roughly speaking, an implementation is \emph{linearizable} \cite{herlihy1990linearizability} if each operation appears to take effect atomically at some point between its invocation and response; it is \emph{wait-free} \cite{herlihy1991wait} if each process completes its operation if it performs a sufficiently large number of steps; it is \emph{obstruction-free} (also called \emph{solo-terminating}) \cite{herlihy2003obstruction} if each process completes its operation if it performs a sufficiently large number of steps when running solo.

The worst-case \emph{amortized step complexity} (henceforth simply amortized step complexity) is defined as the worst-case (taken over all possible executions) average number of steps performed by operations. It measures the performance of an implementation as a whole rather than the performances of individual operations. More precisely, given a finite execution $E$,  an operation  $Op$ \emph{appears} in $E$ if it is invoked in $E$. We denote by $\mathit{Nsteps}$$(op,E)$ the number of steps performed by $op$ in $E$ and by $Ops(E)$ the set of operations that appear in $E$. The amortized step complexity of an implementation $A$ is then:
\begin{equation*}
\mathit{AmtSteps}(A) = \max_{E} \frac{\sum_{op \in Ops(E)} \mathit{Nsteps}(op,E)}{|Ops(E)|}
\end{equation*}

\section{Unbounded Approximate $k$-multiplicative-accurate Counter}
We present a wait-free linearizable unbounded  $k$-multiplicative-accurate  counter with $k\geq\sqrt{n}$ whose amortized step complexity is constant (Algorithm~\ref{algo:unboundedcounter}).  

The algorithm uses an unbounded sequence of bits initially equal to $0$, denoted $\mathit{switch}_0,\mathit{switch}_1,\ldots$ to approximately keep track of the number of increments that have been performed by the processes.  For each $i\geq0$,  $\mathit{switch}_i$ can be accessed by $test\&set$ and $read$ operations. $\mathit{switch}_i.test\&set()$ sets the value of $\mathit{switch}_i$ to $1$ and returns its previous value. A $read$ simply returns the value of $\mathit{switch}_i$. 

\alessia{In a nutshell, each process locally keeps an accurate count of the number of $\mathit{CounterIncrement}$ operations it performs and that are not yet known by the other processes. When this count reaches a certain threshold, the process tries to inform other processes of the number of increments it has performed locally, by attempting to set to $1$ a switch in an appropriate bounded range. When a process succeeds in setting a switch to $1$, it will restart the local count from $0$.
$\mathit{switch}$ bits are set in increasing order with regards to their index, one after the other.}

\alessia{ In particular, the initial value of the threshold is 1 and after their first call to $\mathit{CounterIncrement}$, each process will attempt to set $\mathit{switch}_{0}$. Afterwards, the sequence of $\mathit{switch}_i$ with $i\geq 1$ is partitioned into consecutive intervals of size $k$. For any such interval $[qk+1,(q+1)k]$, where $k$ is an integer, and for any $j \in [qk+1,(q+1)k]$, $\mathit{switch}_j$ equals to $1$ indicates that  $k^{q+1}$  instances of $\mathit{CounterIncrement}$ have been performed by some process. In other words, a process $p$ locally performs $k^{q+1}$ instances of $\mathit{CounterIncrement}$ before attempting to set a switch in the interval $[qk+1,(q+1)k]$ and it increments its local threshold only if it knows that the last switch in this interval is set to $1$ (i.e.; at least $k\cdot k^{q+1}$ instances of $\mathit{CounterIncrement}$ have been performed). The threshold is incremented by a factor $k$. 
There is no guarantee that $p$ will succeed in setting to $1$ one of the switches. But in this case, sufficiently many increments have been performed by the processes so that a $\mathit{CounterRead}$ operation can safely ignore the  increments kept locally by $p_i$ and still returns a value within a bounded factor of the actual number of increments. }

\alessia{ 
By using test\&set to modify a $\mathit{switch}$ from $0$ to $1$, we ensure that the  $\mathit{CounterIncrement}$ instances accounted for by $\mathit{switch}_j$ are distinct from those accounted for by $\mathit{switch}_{j'}$, for any $j' \neq j$.
}


\alessia{Performing  an instance of a $\mathit{CounterRead}$ operation $op$ consists in traversing the sequence of switches until $0$ is found. An approximation of the total number of $\mathit{CounterIncrement}$ is then deduced from the index of the last switch $j$ that $op$ found equal to $1$. The value returned is the sum of the $\mathit{CounterIncrement}$ operations represented by each switch from $\mathit{switch}_0$ to $\mathit{switch}_j$. In particular, $\mathit{switch}_0$ counts for one $\mathit{CounterIncrement}$, and each $\mathit{switch}_i$ in an interval $[qk+1,(q+1)k]$ for some integer $q\leq q_j$ counts for $k^{q+1}$ $\mathit{CounterIncrement}$ operations, where $\mathit{switch}_j$ belongs to the interval $[q_jk+1,(q_j+1)k]$.
}
\subsection{The $\mathit{CounterIncrement}$ operation}
\adnane{
Each process $i$ is equipped with two persistent local variables, $\mathtt{lcounter}_i$ and $\mathtt{limit}_i$. The former stores the number of $\mathit{CounterIncrement}$ instances performed by process $i$ not yet announced to the other processes; and the latter stores the treshold on the number of $\mathit{CounterIncrement}$ that can be performed by process $i$ without informing the other processes.
}

When a $\mathit{CounterIncrement}$ operation is invoked by a process $i$, $\mathtt{lcounter}_i$ is first incremented (line~\ref{INC:incrementlcounter}). To ensure that a $\mathit{CounterRead}$ operation instance returns a value that is within a multiplicative factor $k$ of the actual number of increments, when $\mathtt{lcounter}_i$ reaches a certain threshold stored in $\mathtt{limit}_i$, process $i$ tries to inform the other processes of the number of increments  it has performed locally (lines~\ref{INC:condition-old}). The value of $\mathtt{limit}_i$ is initially $1$ and is multiplied by $k$ each time it is modified (line~\ref{INC:limit1} and line~\ref{INC:limit2}). When $\mathtt{lcount}_i = \mathtt{limit}_i =k^{q+1}$ for some integer $q$, process $i$ tries to set to $1$ one of the $k$ $\mathit{switch}_j$ whose index $j$ is in the corresponding range $[qk+1,(q+1)k]$ (lines~\ref{INC:forloop}- \ref{INC:return1}). If it succeeds, it resets the local counter $\mathtt{lcounter}_i$. The number of $\mathit{CounterIncrement}$ instances it has performed locally has been announced to the other processes, and thus will be taken  into account by future $\mathit{CounterRead}$ operations. 

Additionally, process $i$  writes the index of the switch it sets together with a sequence number into a shared variable $H[i]$ (lines \ref{INC:incrementsn} and  \ref{INC:writepair}). As explained later this  pair is intended to help $\mathit{CounterRead}$ operation instances to complete. Finally, the process will also update the value of the local persistant variable $l_0$ to indicate the index of the switch it managed to set within the interval (line \ref{INC:setl0}). By doing so, we ensure that the process will avoid attempting to reset the same switches every time it reaches the treshold of $limit_i$ in the current interval by starting from the index $qk + l_0$ in the next attempt. 
If it does not succeed, every $\mathit{switch}_j$, where $j \in [qk+1, (q+1)k]$ is set. 
We show in the proof that for $k \geq \sqrt{n}$, this number is sufficiently large for allowing $\mathit{CounterRead}$ operations to return values within a factor $k$ of the total number of $\mathit{CounterIncrement}$  instances (Section \ref{sec:proof-counter}).
The threshold $\mathtt{limit}_i$ is then incremented by a factor $k$ (line~\ref{INC:limit2}) and the value of $l_0$ is reset to $1$ (line~\ref{INC:resetl0}).

\begin{algorithm*}[htb!]
  \SetAlgoLined
  \DontPrintSemicolon
  \SetAlgoNoEnd
  \SetKwProg{Fn}{Function}{}{end}
  \textbf{Shared variables}\;
  $~~$ $\mathit{switch}_j \in \{0,1\} :$ for each $j \in \mathbb{N}$, a $1$-bit register that supports $test\&set$ and $read$ primitives, initially all $0$\;
  $~~$   $\mathit{H}[n] :$ an array of $n$ integer pairs $(val, sn)$\; 
  \BlankLine
  \textbf{Persistent local variables} \;
  $~~$ $\texttt{last}_i \in \mathbb{N}_0 :$  largest index of a switch  accessed by $i$,  initially  $0$\;
  $~~$   $\texttt{lcounter}_i$: number of unannounced \textit{CounterIncrement} by process $i$, initially $0$ \;
  $~~$   $\texttt{limit}_i$ :  number of  \textit{CounterIncrement} that process $i$ can perform locally, initially $1$ \;
  $~~$   $\texttt{sn}_i$:  number of switches set to $1$  by process $i$, initially $0$\;
  $~~$   $\texttt{l}_0$:  index of last switch accessed by the process $i$ in the current set of switches, initially $1$\;
  \BlankLine
  \begin{multicols}{2}
    \Fn{CounterIncrement()}{
      $\mathtt{lcounter}_i \gets  \mathtt{lcounter}_i + 1$     \label{INC:incrementlcounter}\;
      \If{$\mathtt{lcounter}_i = \mathtt{limit}_i$}{\label{INC:condition-old}
        $j \gets  \log_k(\mathtt{lcounter}_i) $ \label{INC:lcounter}\;
		\If{$j > 0$}{\label{INC:morethanone}\	
    \For{$ \ell \gets (j-1)k+ l_0,\ldots, j\cdot k$}{\label{INC:forloop}
      \If{$\mathtt{switch}_{\ell}.test\&set() = 0$}{\label{INC:setswitch}
        $\mathtt{sn}_i \gets \mathtt{sn}_i +1$\; \label{INC:incrementsn}
        $\mathit{H}[i] \gets (\ell,sn_i) $ \; \label{INC:writepair}
        $\mathtt{lcounter}_i \gets 0$\; \label{INC:resetCounter}
        \If{ $\ell = jk$ }{
		 $\mathtt{limit}_i \gets k\cdot{}\mathtt{limit}_i$ \label{INC:limit1} \;
}
	   $ l_0 \gets 1 + \ell \mod k$ \label{INC:setl0}\;
        \textbf{return} \label{INC:return1}\;
      }
    }
    $ l_0 \gets 1$  \label{INC:resetl0}\;    
}

\Else{
\If{$\mathtt{switch}_{0}.test\&set() = 0$}{
        $\mathtt{lcounter}_i \gets 0$\label{INC:setswitch0}\; 
}
      }

    $\mathtt{limit}_i \gets k \cdot \mathtt{limit}_i$ \label{INC:limit2}\;
    }
    \textbf{return}\;
  }
  {}
  \Fn{ReturnValue(p,q)}{ \label{ReturnValue}
	$ret \gets 1 + p\cdot k^{q+1}$ \label{READ:normalvalue} \label{READ:firstswitch}\;
	\uIf{$q \geq 1$}{
	$ret \gets  ret +\sum_{l=1}^{q} k^{l+1}$ \label{READ:previoussets} \;
	}
	\textbf{return} $k\cdot ret$ \label{READ:ktimesret}\;
\vspace{3mm}
}{}
  \Fn{CounterRead()}{
    $c \gets 0$\;
    \While{$switch_{\mathtt{last}_i} \neq 0$}{\label{READ:whilecondition}

	$p \gets last_i \mod k$\;
	$q \gets \lfloor \frac{last_i} {k}\rfloor$ \label{READ:normalq}\;
	\uIf{$last_i \text{ mod } k = 0$}{
      $\mathtt{last}_i \gets \mathtt{last}_i +1$ \label{READ:incrementlast}\; 
	}
	\Else{
		$\mathtt{last}_i \gets \mathtt{last}_i + k -1$ \label{READ:incrementlast2}\;
	} 

	$c \gets c+1$\;
      \If{ $c  \mod  n = 0$}{\label{READ:conditionhelping}
        \uIf{ $c = n $}{
          \For{$j \gets 1,\ldots, n$}{\label{forloop1}
            $\mathtt{help}_i[j] \gets H[j].sn$ \label{READ:firstsn}\;
          }
        }
        \Else{
          \For{$j\gets 1,\ldots,n$}{\label{forloop2}
            $(val,sn) \gets H[j]$\label{READ:secondsn}\;
            \If{ $sn - \mathtt{help}_i[j] \geq 2$}{ \label{READ:help}

			$p \gets val \mod k$ \; 
              $q \gets \lfloor \frac{val} {k}\rfloor$\label{READ:helpingvalue}\;
			\textbf{return} $\mathit{ReturnValue(p,q)}$ \label{READ:helpingreturn}\;

            }
          }
        }
      }
    }

	\uIf{$last_i = 0$}{\textbf{return} 0}
	\textbf{return} $\mathit{ReturnValue(p,q)}$ \label{READ:normalreturn}\;

}{}
\end{multicols}
\caption{$k$-multiplicative-accurate unbounded counter, code for process $i$.}
\label{algo:unboundedcounter}
\end{algorithm*}

\subsection{The $\mathit{CounterRead}$ operation}
\adnane{
When a $\mathit{CounterRead}$ operation is invoked, process $i$ scans the first and last $\mathit{switch}$ of each interval of $k$ switches, looking for the first one that is not yet set to $1$. When such a switch is found, the index $h$ of the last switch read that was equal to $1$ is stored in the persistent local variable $\mathtt{last}_i$ to avoid scanning the sequence from the beginning each time. We compute the value $ret$ returned by the $\mathit{CounterRead}$ operation in the function $\mathit{ReturnValue(p,q)}$ where $h=q\cdot k + p$ (line \ref{ReturnValue}). First, we consider the required increments needed to set all the switches in the current interval $[qk+1,(q+1)k]$ by adding to $ret$ the value $p\cdot k^{q+1}$ (line \ref{READ:normalvalue}). Next, we add $1$ to $ret$ to account for the first $\mathit{switch_0}$ (line \ref{READ:firstswitch}), and then for each previous interval $[(l-1)k+1,lk]$ where $1 \leq l \leq q$, we add $k^{l+1}$ to $ret$ (line \ref{READ:previoussets}). Finally, we return the computed value $ret$ multiplied by a factor $k$ to ensure $ret$ falls in the approximation range of the $k$-multiplicative-accurate counter. 
}

However, it may be the case  that the condition at line \ref{READ:whilecondition} is never verified, as other processes may concurrently keep executing $\mathit{CounterIncrement}$ operations. Thus, to ensure wait-freedom, we employ the following helping mechanism : a $\mathit{CounterIncrement}$ operation by a process $i$ that succeeds to set a $\mathit{switch}_j$, writes the index $j$ of this switch together with a sequence number in the shared register $H[i]$ (lines \ref{INC:incrementsn} and \ref{INC:writepair}). A $\mathit{CounterRead}$ operation $op$ that fails to find a switch to $0$ after $\theta(n)$ steps, reads all the $n$ shared registers $H[i]$ with $i \in {1,\ldots, n}$. If a consistent value is found, then it returns at line \ref{READ:helpingreturn}. Otherwise, it executes another $\theta(n)$ steps. The first time $op$ scan the array $H$, it stores the sequence number read in each $H[j]$, denotes sn$_j$. When scanning $H$ again, $op$ will select a pair whose timestamp is greater than or equal to sn$_j$ +2. This ensures, that the corresponding switch has been set by process $j$ in the execution interval of $op$.

\subsection{Proof}
\label{sec:proof-counter}
\subsubsection{Wait-freedom and technical lemmas}
Let $E$ be an execution of the $k$-multiplicative-accurate unbounded counter implemented in Algorithms \ref{algo:unboundedcounter}. 
\begin{lemma}\label{lemma:unboundedcounter-waitfreedom} Operations $\mathit{CounterIncrement()}$  and $\mathit{CounterRead()}$ are wait-free.
 \end{lemma}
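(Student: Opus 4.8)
The plan is to prove wait-freedom for each operation separately, showing that every process completes its operation within a bounded number of its own steps.

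\medskip

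\textbf{The $\mathit{CounterIncrement}$ operation.} First I would observe that $\mathit{CounterIncrement}$ contains no unbounded loops: the only loop is the \texttt{for} loop at line~\ref{INC:forloop}, which iterates over the index range $[(j-1)k + l_0, \ldots, jk]$, a set of at most $k$ indices. Every other instruction is a single local assignment or a single primitive application ($\mathit{test\&set}$ or $\mathit{read}$) to a base object. Since there are no \texttt{while} loops and the \texttt{for} loop body performs a constant number of steps per iteration, the total number of steps performed by any $\mathit{CounterIncrement}$ instance is at most $O(k)$. Hence $\mathit{CounterIncrement}$ trivially terminates after a bounded number of steps, independent of the actions of other processes, which is stronger than wait-freedom.

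\medskip

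\textbf{The $\mathit{CounterRead}$ operation.} The difficulty here is the \texttt{while} loop at line~\ref{READ:whilecondition}, whose exit condition $\mathit{switch}_{\mathtt{last}_i} = 0$ may never be met in isolation, since concurrent $\mathit{CounterIncrement}$ operations can keep setting switches to $1$ ahead of the reader. So I would argue termination via the helping mechanism rather than via the loop's natural exit condition. The key step is to show that a $\mathit{CounterRead}$ that performs sufficiently many iterations of the loop without terminating must eventually return at line~\ref{READ:helpingreturn}. I would count iterations in blocks of $n$: the counter $c$ is incremented once per iteration (so each iteration is $\Theta(1)$ local steps plus one base-object read), and every time $c$ becomes a positive multiple of $n$ the operation enters the helping block at line~\ref{READ:conditionhelping}. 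On the first such entry ($c = n$) it records the current sequence numbers $H[j].sn$ into $\mathtt{help}_i[j]$ for all $j$; on each subsequent entry it re-reads $H[j]$ and returns as soon as it finds some $j$ with $sn - \mathtt{help}_i[j] \geq 2$.

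\medskip

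The heart of the argument is a pigeonhole/progress claim: if $\mathit{CounterRead}$ does not terminate via the loop's own exit, then switches keep being set to $1$, and since each successful $\mathit{switch}_j.test\&set()$ is accompanied by an increment of some process's $\mathtt{sn}$ counter and a write to the corresponding $H[j]$ (lines~\ref{INC:incrementsn}--\ref{INC:writepair}), the sequence numbers in $H$ must grow without bound across the $n$ registers. By pigeonhole, within $O(n)$ successful switch-settings at least one process $j$ must have set at least two switches, pushing its $H[j].sn$ up by at least $2$ relative to the value first recorded in $\mathtt{help}_i[j]$. Thus after a number of loop iterations bounded as a function of $n$ (I expect $O(n^2)$ iterations, i.e.\ a constant number of helping rounds each costing $n$ steps), the condition $sn - \mathtt{help}_i[j] \geq 2$ at line~\ref{READ:help} holds for some $j$ and the operation returns. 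I would therefore conclude that every $\mathit{CounterRead}$ completes within $O(n^2)$ of its own steps regardless of concurrent activity, establishing wait-freedom. The main obstacle I anticipate is making the progress claim fully rigorous: I must carefully argue that if the reader is genuinely blocked (never sees a $0$ switch), then the \emph{global} rate of switch-setting is fast enough that the sequence numbers in $H$ genuinely advance by at least $2$ on some component within a bounded window of the reader's steps, and in particular that the ``$+2$'' margin correctly filters out the stale values that were already present at the time of the first scan.
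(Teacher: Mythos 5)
Your treatment of $\mathit{CounterIncrement}$ is correct and matches the paper: the only loop is the \textbf{for} loop at line~\ref{INC:forloop}, which has at most $k$ iterations, so every instance finishes after $O(k)$ of its own steps. For $\mathit{CounterRead}$ you also identify the right mechanism (the scan of $H$ every $n$ iterations plus a pigeonhole over the $n$ processes), which is exactly the paper's idea, but your quantitative progress claim is false, and your proof as written rests on it. The claim that every $\mathit{CounterRead}$ completes within $O(n^2)$ of its own steps cannot hold: consider an execution in which a huge number $M$ of $\mathit{CounterIncrement}$ operations complete, setting $\mathit{switch}_0,\ldots,\mathit{switch}_M$ to $1$, before the read begins, with nothing running concurrently with the read. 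The reader then finds a $1$ in every iteration of the \textbf{while} loop at line~\ref{READ:whilecondition} for roughly $2M/k$ iterations, yet the helping test at line~\ref{READ:help} never fires, because no process performs a successful test\&set after the reader's first scan, so every $H[j].sn$ stays equal to the recorded $\mathtt{help}_i[j]$. The reader does terminate, but only after $\Theta(M/k)$ steps, which is not bounded by any function of $n$. The flaw in your pigeonhole step is precisely the point you flagged as the "main obstacle": an iteration of the reader's loop witnesses a switch that was set at \emph{some} point in the past, not a switch-setting that occurred after the first scan of $H$; only settings that happen after that scan can push some $H[j].sn$ two above $\mathtt{help}_i[j]$, and there is no bounded window of the reader's steps within which such settings are guaranteed to occur.

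The fix, and the paper's actual argument, is to drop the uniform bound and argue by contradiction, which is all that wait-freedom (as defined in Section~\ref{sec:preliminaries}) requires. Suppose some $\mathit{CounterRead}$ instance $op_r$ never terminates. Then it reads infinitely many distinct switches, all equal to $1$. At the time of its first scan of $H$ only finitely many switches have been set, so infinitely many successful test\&set operations occur after that scan; by pigeonhole some process $q$ performs infinitely many of them, and each one is followed by $q$ incrementing $\mathtt{sn}_q$ and writing it to $H[q]$ (lines~\ref{INC:incrementsn}--\ref{INC:writepair}). Since $op_r$ rescans $H$ every $n$ iterations, it eventually observes $H[q].sn \geq \mathtt{help}_i[q] + 2$ and returns at line~\ref{READ:helpingreturn}, a contradiction. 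This establishes termination in every execution while, correctly, giving no bound on the worst-case step complexity of $\mathit{CounterRead}$ as a function of $n$ alone.
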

 \begin{proof}
Let $op_r$ and $op_w$ denote a $\mathit{CounterRead}$ and $\mathit{CounterIncrement}$ instance respectively in $E$.  The number of steps taken during $op_w$ is bounded since at most the process will attempt to set $k$ switches during a call to  $\mathit{CounterIncrement}$ and there are no other loops or function calls in the $\mathit{CounterIncrement}$ operation.


Suppose by contradiction that $op_r$ does not terminate. Meaning that every bit $\mathit{switch}_\ell$ it reads has been set to $1$. Since the bits are initially $0$, there is at least one process $q$ that infinitely often performs  a successful  test\&set operation on these bits. Note that each time this occurs, $q$ increments its sequence number $\mathtt{sn}_q$ and reports the new value in the helping array $H$ (lines~\ref{INC:incrementsn}-~\ref{INC:writepair}). As every $n$ iterations of the \textbf{while} loop, $op_r$ scans the array $H$, it will eventually detects that the sequence number of $q$ has been incremented at least twice, hence $op_r$ terminates via the helping mechanism (lines~\ref{forloop2}-\ref{READ:helpingreturn}). 
Therefore, operations $\mathit{CounterIncrement}$  and $\mathit{CounterRead}$ are  wait-free.
 \end{proof}

We continue with a few technical lemmas.

\begin{lemma}\label{claim:incrementalSwitch} Switches are set to 1 in $E$ in increasing order of their index, starting from $\emph{switch}_0$.
\end{lemma}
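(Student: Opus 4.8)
The plan is to argue that a switch $\mathit{switch}_\ell$ can only be set to $1$ (via a successful $test\&set$) after every switch with a smaller index has already been set. I would prove this by strong induction on the index $\ell$, reasoning about the structure of the \texttt{for} loop in $\mathit{CounterIncrement}$ (lines~\ref{INC:forloop}--\ref{INC:return1}) together with the bookkeeping variables $\mathtt{limit}_i$, $\mathtt{lcounter}_i$, and $\mathtt{l}_0$.

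First I would establish the base case and the intra-interval claim. The key observation is that when a process $i$ reaches the threshold with $\mathtt{lcounter}_i = \mathtt{limit}_i = k^{j}$, it enters the \texttt{for} loop (line~\ref{INC:forloop}) and scans the switches of the interval $[(j-1)k+1, jk]$ in strictly increasing order, starting at index $(j-1)k + \mathtt{l}_0$. As soon as it succeeds in setting one switch, it returns (line~\ref{INC:return1}) after updating $\mathtt{l}_0$ to point just past the switch it set (line~\ref{INC:setl0}); if it fails on every switch in the interval it resets $\mathtt{l}_0$ to $1$ (line~\ref{INC:resetl0}) and moves to the next interval by multiplying $\mathtt{limit}_i$ by $k$ (line~\ref{INC:limit2}). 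Hence within a single interval, a given process only ever attempts switches at indices at or above the highest index it has previously set in that interval, so each process attempts the switches of an interval in increasing order. The central technical step is then to argue that \emph{no} process ever attempts $\mathit{switch}_\ell$ before all switches in earlier intervals, and before the earlier switches of $\ell$'s own interval, have been set to $1$ by \emph{some} process.

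The main obstacle, and the heart of the argument, is ruling out that two processes get out of sync: I must show that a process cannot skip ahead to a high-index switch while a lower-index switch is still $0$. The invariant I would maintain is that a process $i$ with current threshold $\mathtt{limit}_i = k^{j}$ and counter $\mathtt{l}_0$ has already witnessed (through its own successful $test\&set$ operations or through the failure branch indicating the switch was already set) that every switch with index strictly below $(j-1)k + \mathtt{l}_0$ is set to $1$. This holds initially (threshold $1$, then the special handling of $\mathit{switch}_0$ in line~\ref{INC:setswitch0}) and is preserved across iterations: a process only increments $\mathtt{limit}_i$ past an interval after its \texttt{for} loop has traversed every switch of that interval, and by the $test\&set$ semantics each such traversal either set the switch itself or found it already $1$. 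Combining this invariant over all processes, the first $test\&set$ that succeeds on $\mathit{switch}_\ell$ can only be performed by a process whose bookkeeping guarantees all indices below $\ell$ are already $1$; therefore switches are set in increasing order of index, starting from $\mathit{switch}_0$. I would finish by noting that since $test\&set$ sets a switch exactly once (subsequent applications return $1$ and trigger the failure branch), the "set to $1$" events are well-defined and totally ordered consistently with this index order.
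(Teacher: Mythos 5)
Your proposal is correct and follows essentially the same route as the paper's own proof: both arguments analyze, per process, how the bookkeeping variables $\mathtt{limit}_i$ and $\mathtt{l}_0$ (via lines~\ref{INC:setl0}, \ref{INC:resetl0}, \ref{INC:limit1} and \ref{INC:limit2}) force each process to apply its $test\&set$ primitives to switches contiguously and in strictly increasing index order, advancing to a new interval only after the last switch of the current one has been accessed. Your explicit invariant (every switch below index $(j-1)k+\mathtt{l}_0$ is already set) and the final step combining it across processes spell out the cross-process conclusion that the paper's proof leaves implicit, but the underlying argument is the same.
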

\begin{proof} 
For each process $p$ the initial value of $limit_p$ is $1$ and of $counter_p$ is $0$, thus the first $\mathit{CounterIncrement}$ operation by process $p$ applies a test\&set primitive to $switch_0$ according to lines  \ref{INC:incrementlcounter}, \ref{INC:condition-old}, \ref{INC:lcounter}, and \ref{INC:setswitch0}. 
We now prove that for any given process $p$ and for any $j\geq 1$, $p$ applies a test\&set primitive (if any) on each of the switches with indexes in the interval $[(j-1)\cdot k +1,\ldots j \cdot k]$ in an increasing order of their index, starting from $switch_{(j-1)\cdot k +1}$. 
First observe that for any process $p$, the initial value of $l_0$ is $1$, and $l_0$ is set to $1$ iff the value of $limit_p$ is multiplied by a factor $k$ (lines \ref{INC:resetl0},\ref{INC:limit2} and lines \ref{INC:limit1},\ref{INC:setl0}). This implies that when a new $j$ is computed at line \ref{INC:lcounter}, the value of $l_0$ is $1$.

Then the first iteration of the \textbf{for loop} at line \ref{INC:forloop} starts at $l=(j-1)\cdot k+1$. Also, the value of $l$ is incremented by one at each iteration of the for loop at line \ref{INC:forloop} unless $p$ succesfully sets a $switch_{(j-1)k +i}$ with $i \in\{1,\ldots,k\}$. In this latter case, the value of $l_0$ is modified at line \ref{INC:setl0} and takes the value $i+1$ if $i<k$, or $1$ otherwise (we reach the end of the set). If $l_0$ takes a value different from $1$, that is $l \neq j\cdot k$, (otherwise the claim is proved), then the $\mathit{CounterIncrement}$ operation returns at line \ref{INC:return1} without modifying the value of $limit_p$. Thus, in the execution of a successive $\mathit{CounterIncrement}$ operation (if any), process $p$ will apply the next test\&set primitive (if any) to $switch_{jk+i+1}$ (because of lines \ref{INC:condition-old}, \ref{INC:lcounter}, \ref{INC:forloop}). 

The value of $limit_i$ is multiplied by $k$ (and than the value of $j$ is incremented by one) only after a process has applied a test\&set primitive (both successfully or not) to the last switch in the current interval $[(j-1)\cdot k +1, \ldots ,j\cdot k]$ with $log_k(limit_i)=j$ (lines \ref{INC:limit1}, \ref{INC:limit2}). This completes the proof. 

\end{proof}

 \begin{lemma}\label{lemma:HelpingRead} For any given execution $E$, if a $\mathit{CounterRead}$ operation $op$ returns the value computed in $\mathit{ReturnValue(p,q)}$ at line \ref{READ:helpingreturn}, then $switch_{q\cdot k +p}$ was equal to $0$ before the invocation of $op$ and the test\&set primitive that sets $switch_{q\cdot k +p}$ to $1$ is applied during the execution interval of $op$.
 \end{lemma}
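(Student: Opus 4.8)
The plan is to trace the unique path by which $op$ can reach line~\ref{READ:helpingreturn}, and then pin down exactly which test\&set sets $switch_{q\cdot k+p}$. Operation $op$ can only return at line~\ref{READ:helpingreturn} inside the \emph{else} branch of the helping code, i.e. at an iteration with $c=mn$ for some $m\geq 2$, upon reading from a register $H[j]$ a pair $(val,sn)$ with $sn-\mathtt{help}_i[j]\geq 2$ (line~\ref{READ:help}); the computation of $p$ and $q$ at line~\ref{READ:helpingvalue} gives $q\cdot k+p=val$. Let $s_j=\mathtt{help}_i[j]$ be the value recorded during the \emph{first} scan of $H$, performed at the $c=n$ iteration (line~\ref{READ:firstsn}); this first scan, like every step of $op$, takes place after $op$'s invocation. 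Since process $j$ is the only writer of $H[j]$ and writes $H[j]\gets(\ell,\mathtt{sn}_j)$ (line~\ref{INC:writepair}) immediately after incrementing $\mathtt{sn}_j$ (line~\ref{INC:incrementsn}), which in turn immediately follows a successful test\&set on $\mathtt{switch}_\ell$ returning $0$ (line~\ref{INC:setswitch}), the pair $(val,sn)$ certifies that the $sn$-th successful test\&set of $j$ was applied to $switch_{val}$ and set it from $0$ to $1$. A switch is set to $1$ by exactly one test\&set (the unique invocation returning $0$), so this is precisely the test\&set named in the statement; call its time $t_{\mathit{ts}}$.

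The easy half --- that $t_{\mathit{ts}}$ precedes $op$'s response --- is immediate: the setting test\&set precedes its matching write of $(val,sn)$ to $H[j]$, which precedes $op$'s read of that pair, which precedes $op$'s response, since $op$ returns right after the read.

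The crux is to show that $t_{\mathit{ts}}$ \emph{follows} $op$'s invocation, and this is exactly where the threshold of $2$ (rather than $1$) is essential: a gap of a single sequence number could not exclude the scenario in which $j$ performs the setting test\&set before $op$ starts but merely delays the matching write to $H[j]$ until after the first scan. To rule this out, I would argue as follows. As $j$ is the sole writer of $H[j]$ and the sequence numbers it writes are strictly increasing, reading $H[j].sn=s_j$ during the first scan means that no write with sequence number exceeding $s_j$ has occurred by then; hence every write to $H[j]$ with sequence number $>s_j$, and in particular the $(sn-1)$-th write whose number is $sn-1\geq s_j+1$, happens after the first scan and therefore after $op$'s invocation. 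Finally, in $j$'s program order the write of a pair immediately follows the test\&set that produced it, and the next successful test\&set can only occur in a later step, so the $(sn-1)$-th write precedes the $sn$-th test\&set; thus $t_{\mathit{ts}}$ is later than the $(sn-1)$-th write and hence later than $op$'s invocation. It follows that $switch_{val}$ was still $0$ at $op$'s invocation and is set to $1$ within $op$'s execution interval, as claimed. The single step demanding care is this last one --- that the $(sn-1)$-th write strictly precedes the $sn$-th test\&set --- which I would read off directly from the straight-line control flow of lines~\ref{INC:setswitch}--\ref{INC:writepair} together with the fact that $j$ returns from $\mathit{CounterIncrement}$ before attempting any further switch.
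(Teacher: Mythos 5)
Your proof is correct and follows essentially the same route as the paper's: identify, via the pair $(val,sn)$ read from $H[j]$, the unique successful test\&set by process $j$ that sets $switch_{val}$, observe that the first scan (which occurs after $op$'s invocation) read a sequence number $s_j \leq sn-2$, and conclude from $j$'s program order that this test\&set lies inside $op$'s execution interval. Your write-up in fact spells out the chain (the write carrying sequence number $sn-1$ occurs after the first scan, and the test\&set associated with sequence number $sn$ occurs after that write) that the paper's terser proof leaves implicit, so it is, if anything, more complete.
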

 \begin{proof} At line \ref{READ:secondsn},  $op$ reads a pair $(val,\sigma)$ from an entry $H[p']$ of the helping array $H$ where $val = q\cdot k +p$. According to lines \ref{INC:setswitch}, \ref{INC:incrementsn}, and \ref{INC:writepair}, a unique process $p'$ sets to 1 the $switch_{val}$ and associates with $val$ the  sequence number $\sigma$ computed at line~\ref{INC:incrementsn}, before writing the pair $(v,\sigma)$ to $H[p']$ in the execution of a $\mathit{CounterIncrement}$ operation $op'$.

Let $p$ be the process that executes the $\mathit{CounterRead}$ operation $op$. Denote by $\sigma'$ the value of $H[p'].sn$ read by $p$ at line \ref{READ:firstsn} in the execution of $op$. According to line \ref{READ:help}, $\sigma-\sigma'\geq 2$. This means that process $p'$ executes line \ref{INC:incrementsn} at least twice during the execution interval of $op$. In particular $p'$ executes the step that set $switch_{val}$ to $1$ after $op$ was invoked by $p$. 
This proves the claim.
 \end{proof}

\subsubsection{Linearizability}
We next define the linearization $L$ of the operations in $E$ by first removing any $\mathit{CounterRead}$ operation that did not complete and any incomplete $\mathit{CounterIncrement}$  operation that has not successfully executed line \ref{INC:setswitch}.

Let $OP_W$ be the set of (complete and incomplete) $\mathit{CounterIncrement}$ operations that sucessfully set a switch while executing line \ref{INC:setswitch}. Let $OP_{LO}$ be the remaining complete $\mathit{CounterIncrement}$ operations in $E$ and $OP_R$ be the set of complete $\mathit{CounterRead}$ operations in $E$. Observe that each $\mathit{CounterIncrement}$ operation succesfully sets at most one switch, and each switch is succesfully set by at most one process. Thus we can univocally associate each operation in  $OP_W$ with the switch it sets.
We order the operations in $OP_W \cup OP_{LO} \cup OP_R$, according to the following rules :
\begin{enumerate}
\item We linearize each operation in $OP_W$ at the step where it sets its corresponding switch. By claim \ref{claim:incrementalSwitch}, operations in $OP_W$ are totally ordered and this order respect the real-time order. In the following we denote $opw_i$ the $i$-th operation in $OP_W$ according to our linearization order with $i\geq 0$.
\item  We linearize a $\mathit{CounterRead}$ operation $opr$ according to whether it returns normally or through the helping mechanism:
\begin{enumerate}
\item If $opr$ returns $\mathit{ReturnValue(p,q)}$ normally at line \ref{READ:normalreturn}, then it is linearized at the step where it reads the value $1$ of $switch_{q\cdot k +p}$ at line \ref{READ:whilecondition} \label{LINEARIZATION:normalreturn}. This is well-defined because this read primitive exists and it is unique (it is easy to check from the pseudo-code).
\item If $opr$ returns $\mathit{ReturnValue(p,q)}$ via the helping mechanism at line \ref{READ:helpingreturn}, then the operation is linearized immediately after $opw_{q\cdot k +p}$ \label{LINEARIZATION:helpingreturn}.
\end{enumerate}
\item Let L$_{WR}$ denote the linearization of all operations in $OP_W  \cup OP_R$ according to rule 1 and 2,
we linearize an operation $op$ in $OP_{LO}$ immediately before the first operation $op'$ in L$_{WR}$ that follows $op$ in the real-time order or at the end of L$_{WR}$ if $op'$ does not exist. 
\end{enumerate}
$\mathit{CounterRead}$ operations that returns $0$ after reading $switch_0=0$ are linearized before $opw_0$. If several operations are ordered at the same position, they are ordered respecting their real time order.

Linearization rule 2 and Lemma \ref{lemma:HelpingRead} implies the following claim.

\begin{claim} \label{claim:linearizationpoint}
Let $opr$ be a $\mathit{CounterRead}$ operation. We have that $opr$ is linearized at some point after its invocation.
\end{claim}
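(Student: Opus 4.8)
The plan is to prove Claim~\ref{claim:linearizationpoint} by a straightforward case analysis following the two sub-cases of linearization rule~2, and in each case exhibiting a step of $opr$ that lies at or after the chosen linearization point, while the linearization point itself occurs at or after the invocation of $opr$.

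First I would handle the \emph{normal return} case (rule 2(a)). Here $opr$ is linearized at the step where it reads the value $1$ of $switch_{q\cdot k+p}$ at line~\ref{READ:whilecondition}. This read is performed \emph{by the process executing $opr$ itself} during the execution of $opr$, so it necessarily occurs strictly after the invocation of $opr$. Hence in this case the linearization point trivially lies after the invocation. I would also note the degenerate sub-case of a $\mathit{CounterRead}$ returning $0$ after reading $switch_0 = 0$: such an operation is linearized before $opw_0$, but the relevant read of $switch_0$ is again a step of $opr$ performed after its invocation, and one must check (or restrict the claim to) the intended reading of ``after its invocation''; I would argue this case using the same observation that the defining read happens inside $opr$'s interval.

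Next I would treat the \emph{helping return} case (rule 2(b)), which is the real content of the claim. Here $opr$ is linearized immediately after $opw_{q\cdot k+p}$, so I must show that the step at which $opw_{q\cdot k+p}$ sets its switch occurs after the invocation of $opr$. This is exactly where Lemma~\ref{lemma:HelpingRead} does the work: it guarantees that $switch_{q\cdot k+p}$ was equal to $0$ before the invocation of $opr$ and that the test\&set primitive setting $switch_{q\cdot k+p}$ to $1$ is applied \emph{during the execution interval of $opr$}. Since by rule~1 the operation $opw_{q\cdot k+p}$ is linearized precisely at this test\&set step, and this step falls inside $opr$'s interval (hence after $opr$'s invocation), the point ``immediately after $opw_{q\cdot k+p}$'' also lies after $opr$'s invocation. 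Combining the two cases yields the claim.

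The main obstacle I anticipate is purely bookkeeping rather than conceptual: one must be careful that the linearization point assigned to $opr$ genuinely falls within $opr$'s own execution interval and not merely after its invocation in some global ordering — in particular, in the helping case the linearization is positioned relative to another operation $opw_{q\cdot k+p}$, so I must invoke Lemma~\ref{lemma:HelpingRead} to pin that switch-setting step inside $opr$'s interval rather than assuming it. A secondary subtlety is confirming that the read at line~\ref{READ:whilecondition} used to define the normal linearization point is well-defined and unique (already asserted in rule 2(a)), so that the phrase ``the step where it reads $1$'' unambiguously refers to a concrete step of $opr$. With these observations the proof is essentially a two-line case split once the two earlier results are quoted.
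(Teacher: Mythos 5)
Your proposal is correct and takes essentially the same route as the paper: the paper dispatches this claim in a single sentence ("Linearization rule 2 and Lemma~\ref{lemma:HelpingRead} implies the following claim"), and your two-case expansion — the normal-return linearization point being a read step of $opr$ itself, and the helping-return point being pinned inside $opr$'s execution interval via Lemma~\ref{lemma:HelpingRead} applied to the test\&set that sets $switch_{q\cdot k+p}$ — is exactly the intended argument. The subtleties you flag (well-definedness of the defining read, and the $0$-returning case) are likewise handled in the paper only by the assertions already built into the linearization rules, so nothing in your write-up diverges from the paper's reasoning.
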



 \begin{lemma}[Linearizability]\label{lemma:unboundedcounter-linearizability}
 Algorithm \ref{algo:unboundedcounter} is a linearizable implementation of a $k$-multiplicative-accurate unbounded counter.
 \end{lemma}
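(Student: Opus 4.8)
The plan is to establish that the linearization order $L$ defined above is (i) a valid sequential history consistent with the real-time order of $E$, and (ii) satisfies the sequential specification of the $k$-multiplicative-accurate counter. Consistency with real-time order for operations in $OP_W$ follows directly from Lemma~\ref{claim:incrementalSwitch} (switches are set in increasing index order, and each $opw_i$ is linearized at its \texttt{test\&set} step). For $\mathit{CounterRead}$ operations, Claim~\ref{claim:linearizationpoint} already places each read after its invocation; I would complement this by showing each read is linearized before its response. For a normal return (rule 2a), the linearization point is the read step at line~\ref{READ:whilecondition}, which lies in the execution interval by construction. For a helping return (rule 2b), Lemma~\ref{lemma:HelpingRead} guarantees that $switch_{q\cdot k+p}$ was $0$ before $op$'s invocation and set during $op$'s interval, so placing $op$ immediately after $opw_{q\cdot k+p}$ keeps it within its interval. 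The placement of $OP_{LO}$ operations (rule 3) is consistent with real-time order essentially by definition of that rule.

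The heart of the proof is verifying the sequential specification. I would introduce, for each prefix of $L$, a count of the total number of $\mathit{CounterIncrement}$ instances linearized so far, and separately the number $v$ of increments ``accounted for'' by the switches that are set. The key combinatorial fact to nail down is the correspondence between the index of the highest set switch and the actual increment count: if the highest set switch has index $h = q\cdot k + p$, then by the interval-accounting scheme (each switch in interval $[(l-1)k+1, lk]$ represents $k^{l+1}$ increments, with $switch_0$ representing one increment), the number of increments announced via switches is exactly $ret = 1 + p\cdot k^{q+1} + \sum_{l=1}^{q} k^{l+1}$, which is precisely the value computed by $\mathit{ReturnValue}(p,q)$ before the final multiplication by $k$. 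I would then argue that at the linearization point of a read returning $\mathit{ReturnValue}(p,q)$, the true number $v$ of increments linearized before it satisfies $ret \leq v$, since every set switch corresponds to $k^{l+1}$ genuinely performed (and hence linearized-earlier) increments, while the increments still held locally in the various $\mathtt{lcounter}$ variables can overshoot $ret$ only by a bounded amount.

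The crucial quantitative step, and the main obstacle, is bounding the unannounced increments. At any linearization point the discrepancy between the true count $v$ and the announced count $ret$ comes from two sources: increments sitting in local counters that have not yet triggered a successful \texttt{test\&set}, and the gap created because a read only inspects the first and last switch of each interval. I would bound the per-process local surplus: a process with current limit $k^{q+1}$ holds at most $k^{q+1}$ unannounced increments, and since $k \geq \sqrt{n}$ there are at most $n \leq k^2$ processes, so the total unannounced amount is at most $n\cdot k^{q+1} \leq k^{q+3}$, which is comparable to $ret$ up to the factor $k$. Combining these estimates I would show
\begin{equation*}
\frac{v}{k} \leq k\cdot ret \leq v\cdot k,
\end{equation*}
so that the returned value $k\cdot ret$ (line~\ref{READ:ktimesret}) always lies in the required approximation window $[v/k, v\cdot k]$. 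The delicate part is making the case analysis uniform across the boundary cases: $h=0$ (read returns $0$, which must be reconciled with $v$ being small), reads returning through helping versus normally, and the transition between intervals where the threshold $\mathtt{limit}$ has just been multiplied by $k$. Handling the $k\geq\sqrt{n}$ condition tightly in the surplus bound is where the argument must be most careful, since it is exactly this inequality that makes the multiplicative error fit within factor $k$.
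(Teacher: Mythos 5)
Your overall architecture is the same as the paper's: establish real-time consistency of $L$ from the linearization rules together with Lemma~\ref{claim:incrementalSwitch}, Claim~\ref{claim:linearizationpoint} and Lemma~\ref{lemma:HelpingRead}, then reduce the sequential specification to a lower bound $ret\le v$ and an upper bound on $v$, where $v$ is the number of increments linearized before the read and $k\cdot ret$ is the returned value; this is exactly the paper's Claim~\ref{claim:min-maxINC}, with your $ret$ equal to its $u_{min}$. The real-time part and the lower bound are fine. The genuine gap is in the upper bound on $v$, which is the only delicate step of the lemma, and as sketched it would not go through.

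Concretely, you identify the two sources of over-count --- unannounced increments held in local counters, and switches already set but not inspected because a read visits only the first and last switch of each interval --- but you quantify only the first, and then declare the combination done. The second source cannot be handled by a uniform worst-case charge: the paper bounds it by $p(k-1)k^{q+1}$ with $p\in\{0,1\}$, and this dependence on $p$ is essential. When $p=0$, the read subsequently finds $switch_{qk+1}=0$, hence (by monotonicity of switches and Lemma~\ref{claim:incrementalSwitch}) that switch and all later ones were already $0$ at the linearization point, so the unobserved-switch gap is zero; when $p=1$ the gap can be as large as $(k-1)k^{q+1}$, but then $ret$ contains the extra term $p\,k^{q+1}$, and it is precisely this term, multiplied by $k^2$, that absorbs the gap. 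If instead one charges the worst-case gap in all cases, the inequality you need, $v\le k^2\,ret$, becomes false: for $p=0$, $q=1$, $n=k^2$ one gets $ret=1+k^2$ while $ret+(k-1)k^{2}+n(k^{2}-1)=1+k^3+k^4-k^2 > k^2+k^4 = k^2\,ret$ for every $k\ge 2$. Relatedly, your surplus estimate $n\,k^{q+1}\le k^{q+3}$ is comparable to $ret$ only up to the factor $k^2$, not $k$ as you claim; since $k^2$ is the \emph{entire} error budget in $v\le k^2\,ret$, that slack leaves no room for the gap term or even for $ret$ itself, so ``combining these estimates'' cannot close without the term-by-term comparison the paper performs (bounding $u_{max}/k$ against $k\cdot u_{min}$, playing $n\le k^2$ against the term $k^{q+2}$ that appears in $k\cdot u_{min}$ when $q\ge 1$). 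A smaller omission of the same kind: you assert that every process holds at most $k^{q+1}$ unannounced increments, which presupposes that no process's $\mathtt{limit}$ exceeds $k^{q+1}$ at the linearization point; this too needs Lemma~\ref{claim:incrementalSwitch} plus monotonicity, since a larger limit would require the last switch of interval $q{+}1$ to have been set, contradicting the $0$ the read observes.
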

  \begin{proof}
Let $op_1$ and $op_2$ be two operations in $E$ such as $op_1$ ends before $op_2$ is invoked. We prove that the linearization order $L$ respects the real-time order, thus $op_1$ precedes $op_2$ in $L$. First, we have the following claim:
\begin{itemize}
\item Let $op_1$ and $op_2$ be two $\mathit{CounterIncrement}$ operations. If at least one of these operations is in $OP_{LO}$, the claim trivially follows from rule 3. Otherwise it is already proved in rule 1. 
\item Let $op_1$ and $op_2$ be two $\mathit{CounterRead}$ operations. If both $op_1$ return normally the claim trivially holds from rule \ref{LINEARIZATION:normalreturn} and claim \ref{claim:linearizationpoint}.  So consider that $op_1$ returns through the helping mechanism and let $h_1=q\cdot k+p$ be the index of the switch read by $op_1$ at line \ref{READ:helpingvalue}, the last time before returning. According to rule  \ref{LINEARIZATION:helpingreturn}, $op_1$ is linearized immediately after $opw_{h_1}$. Also, by Lemma \ref{lemma:HelpingRead} and rule 2, $op_2$ is linearized after $opw_{h_1}$. The claim follows since according our linearization rules, If several operations are ordered at the same position, they are ordered respecting their real time order.
\item Consider that $op_1$ is a $\mathit{CounterIncrement}$ and $op_2$ is a $\mathit{CounterRead}$ operation. The claim follows from rules 1 and 2 and claim  \ref{claim:linearizationpoint} (the reverse follows a similar reasoning).
\end{itemize}

Next claim will be useful for proving that the ordering $L$ is consistent with the sequential specification of the $k$-multiplicative-accurate counter. 
\begin{claim}\label{claim:min-maxINC} Let $op$ be a $\mathit{CounterRead}$ operation invoked by a process $p_i$ that returns $\mathit{ReturnValue(p,q)}$. Then, the number of $\mathit{CounterIncrement}$ operations linearized before $op$ in $L$, denoted $v$, is at least $u_{min}=1+ \sum_{l=1}^{q}k^{l+1} + p\cdot k^{q+1}$ and at most $u_{max}=1+ \sum_{l=1}^{q}k^{l+1} +p(k-1)k^{q+1}+ n(k^{q+ 1}-1)$ where $n$ is the number of processes.
\end{claim}
\begin{proof}
Let $op$ be a $\mathit{CounterRead}$ operation invoked by a process $p_i$ that returns $\mathit{ReturnValue(p,q)}$ and let $h=q\cdot k +p$ with $p\geq 0$. Consider the  $\mathit{CounterIncrement}$ operation by $p_j$ that set to 1 the switch$_h$, denoted $opw_h$.

$op$ is linearized at the step where it reads $switch_h$ if it returns normally, or immediately after $opw_h$. Thus, from our linearization rules, the minimal number of $\mathit{CounterIncrement}$ operations that are linearized before $op$ includes each $opw_i$ in $OP_W$ with $0\leq i \leq h$, and every $\mathit{CounterIncrement}$ in $OP_{LO}$ linearized before $op$. 

We have by construction that each $switch_s$ in the $(l+1)$-th set of $k$ switches indexed in the interval $[l\cdot k +1 \dots (l+1)k]$ with $l\geq 0$, requires a process to perform $k^{l+1}$ $\mathit{CounterIncrement}$ operation instances before attempting to set $switch_s$ to $1$. In other words a process $p_i$ needs its local variable $lcounter_i$ to be equal to $k^{l+1}$ before it can attempt to set any $switch_s$ in $[l\cdot k +1 \dots (l+1)k]$ (line \ref{INC:condition-old}).  Since the value of $lcounter_i$ is reset to $0$ after a succesful $test\&set$ primitive is applied on a switch (line \ref{INC:resetCounter}), the sets of $\mathit{CounterIncrement}$ operation instances associated with any pair of succesful $test\&set$ primitives are disjoint.
Thus, $u_{min} = 1 + k\sum_{l=0}^{q-1}k^{l+1} + p\cdot k^{q+1}=1+ k\sum_{l=1}^{q}k^{l} + p\cdot k^{q+1}$ since we account for, in addition to the $p$ switches in the $(q+1)$-th set and $switch_0$, all $k$ switches in each of the sets indexed from $1$ to $q$.  

Similarily, we compute an upper bound $u_{max}$ on the maximum number of $\mathit{CounterIncrement}$ linearized before $op$. First suppose that $op$ returns normally. As already said, $op$ is linearized at the step where it reads $switch_h$ with $h=qk+p$. We have two possible cases either $p$ is equal to $0$ or it is equal to $1$ because the process checks the first and last switch of each set during the $\mathit{CounterRead()}$ instance. These two cases are depicted in Figure \ref{fig:umax} a) and b) respectively. If $p$ is equal to $0$, then process $p_i$ read $switch_{kq+1}=0$ in the execution of $op$, and according to our linearization rules $opw_{kq+1}$ is linearized after $op$. In a similar way, if $p$ is $1$ $p_i$ read $switch_{(q+1)k}=0$ and $opw_{(q+1)k}$ is linearized after $op$. However, in this second case, all the $k-1$ switches$_j$ with $j\in [q\cdot k +2 \dots (q+1)k-1]$ may have been set to $1$ before $op$ applied its read to switch$_{qk+1}$, and all the corresponding $opw_j$ may be linearized before $op$. Thus, the number of $opw$ linearized before $op$ is smaller than or equal to  $1+ \sum_{l=1}^{q}k^{l+1} +p(k-1)k^{q+1}$.
It remains to count the number of $\mathit{CounterIncrement}$ in $OP_{LO}$ linearized before $op$.
For every process $p_i$ the value of $lcounter_i$ is smaller than $k^{q+1}$ immediately before $p$ read either $switch_{kq+1}=0$ or $switch_{(q+1)k}=0$ in the execution of $op$. Since a process resets the value of its local counter only when it succeeds to set a switch to $1$ (line \ref{INC:resetCounter}), $lcounter_i$ defines the number of $\mathit{CounterIncrement}$ by $p_i$ in $OP_{LO}$ that are linearized before $op$.  Therefore, $u_{max}=1+ \sum_{l=1}^{q}k^{l+1} +p(k-1)k^{q+1}+ n(k^{q+1}-1)$ where $n$ is the number of processes.
If $op$ returns via the helping mechanism, then according to rule \ref{LINEARIZATION:helpingreturn}, it is linearized immediately after $opw_{q\cdot k +p}$ with $0 \leq p<k$. Thus, $1+ \sum_{l=1}^{q}k^{l+1} +pk^{q+1}$ is the number of $\mathit{CounterIncrement}$ in $OP_{W}$ linearized before $op$. Since $p<k$ the local counter of every process immediately after $opw_{q\cdot k +p}$ sets the corresponding switch is smaller than $k^{q+1}$. Since $k>1$, the claim follows.
\end{proof}
Let $op$ be a $\mathit{CounterRead}$ operation and let $v_{op} = \mathit{ReturnValue(p,q)}$ be the value it returns.
According to lines  \ref{READ:firstswitch}, \ref{READ:previoussets} and \ref{READ:ktimesret} of Algorithm \ref{algo:unboundedcounter}, $v_{op} = k(1 + \sum_{l=1}^{q}k^{l+1} + p\cdot k^{q+1})$; that is $v_{op} = k\cdot u_{min}$. 
According to claim \ref{claim:min-maxINC}, the number of $\mathit{CounterIncrement}$ operations linearized before $op$ in $L$, denoted $u$, is at least $u_{min}=1+ \sum_{l=1}^{q}k^{l+1} + p\cdot k^{q+1}$ and at most $u_{max}=1+ \sum_{l=1}^{q}k^{l+1} +p(k-1)k^{q+1}+ n(k^{q+ 1}-1)$ (where $n$ is the number of processes). And we have:
\begin{equation*}
\begin{aligned}
\frac{u_{max}}{k} &=\frac{1}{k} + \sum_{l=1}^{q}k^{l} + p\frac{k-1}{k}k^{q+1} + \frac{n}{k}(k^{q+1} -1)
\end{aligned}
\end{equation*}
\begin{equation*}
\begin{aligned}
\frac{u_{max}}{k} &\leq \sum_{l=1}^{q}k^{l} + p\cdot k^{q+1} + n\cdot k^q \\
\text{And } v_{op} &= k(1 + k\sum_{l=1}^{q}k^{l} + p\cdot k^{q+1}) \\
&= k(1 + k\sum_{l=1}^{q-1}k^{l} + k^{q+1} + p\cdot k^{q+1}) \\
&= k + k\sum_{l=2}^{q}k^{l} + p\cdot k^{q+2} + k^{q+2}
\end{aligned}
\end{equation*}
Thus, for $k \geq \sqrt n$, $\frac{u_{max}}{k} \leq v_{op}$. Therefore, since $p<k$, we have $\frac{u}{k} \leq \frac{u_{max}}{k} \leq v_{op} \leq k \cdot u_{min} \leq k \cdot u$. This completes the proof.
\end{proof}
\begin{figure}[hbtp] 
  \centering
\includegraphics[scale =0.4] {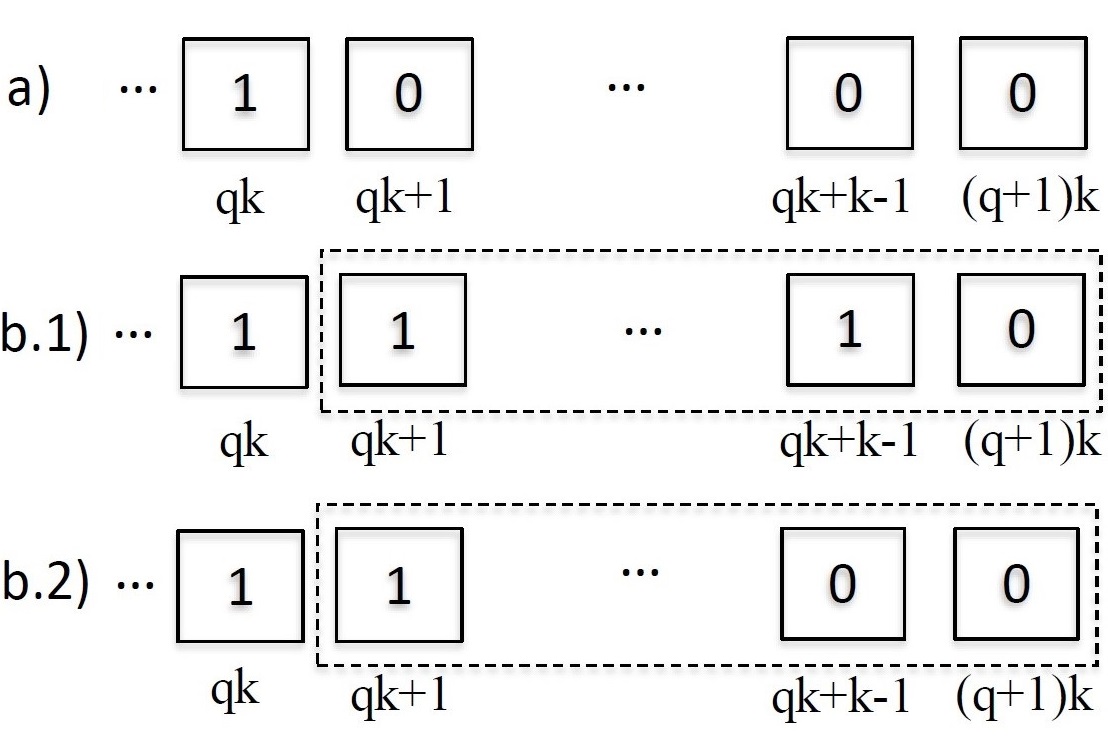} \caption{Switches state for the proof of claim \ref{claim:min-maxINC}. The dotted line indicate the $q+1$th interval of consecutive switches.When $p=1$, $op$ does not distinguish between cases b.1) and b.2)}
\label{fig:umax}
\end{figure}
\subsubsection{Complexity analysis}
\begin{lemma}\label{lemma:perprocInc} If process $p$ applies a $test\&set()$ primitive to a switch$_\alpha$ with $i \cdot k +1 \leq \alpha \leq (i+1) \cdot k$ for some integer $i \geq 0$, then $p$ has performed at least $k^{i+1}$ $\mathit{CounterIncrement()}$ operations. 
\end{lemma}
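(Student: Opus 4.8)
The plan is to follow the evolution of process $p$'s local state during the $\mathit{CounterIncrement}$ that applies the $test\&set$ to $\mathit{switch}_\alpha$, and to read off from the index $\alpha$ the exact value of $p$'s local threshold at that instant. The first observation I would record is that any $test\&set$ on a switch of positive index is performed at line~\ref{INC:setswitch}, i.e. inside the \textbf{for} loop of line~\ref{INC:forloop}; this loop is entered only when the test of line~\ref{INC:condition-old} holds, namely $\mathtt{lcounter}_p = \mathtt{limit}_p$, and when $j = \log_k(\mathtt{lcounter}_p) > 0$ (lines~\ref{INC:lcounter}--\ref{INC:morethanone}). Because $\mathtt{limit}_p$ is initialized to $1 = k^0$ and is thereafter only multiplied by $k$ (lines~\ref{INC:limit1} and~\ref{INC:limit2}), it is at all times an integral power of $k$; hence $j$ is a well-defined positive integer and, at the moment the loop is entered, $\mathtt{lcounter}_p = \mathtt{limit}_p = k^{j}$.

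Next I would identify $j$ from $\alpha$. In the loop iteration that touches $\mathit{switch}_\alpha$ the loop variable $\ell$ equals $\alpha$, and over the whole loop $\ell$ ranges in $[(j-1)k + l_0,\, jk]$ with $l_0 \in \{1,\dots,k\}$, so in particular $\alpha \in [(j-1)k+1,\, jk]$. The blocks $[(j-1)k+1,\, jk]$ for distinct positive integers $j$ are pairwise disjoint, and $[ik+1,(i+1)k]$ is exactly the block for $j=i+1$; therefore $\alpha \in [ik+1,(i+1)k]$ forces $j = i+1$. Combining this with the previous paragraph, at the step where $p$ applies $test\&set$ to $\mathit{switch}_\alpha$ we have $\mathtt{lcounter}_p = k^{i+1}$, and this value is left unchanged throughout the loop until a success occurs, so the conclusion holds whether or not this particular $test\&set$ succeeds.

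It then remains to translate the counter value into a count of operations. The variable $\mathtt{lcounter}_p$ increases by exactly one on each $\mathit{CounterIncrement}$ call (line~\ref{INC:incrementlcounter}) and is reset to $0$ only initially or immediately after a successful $test\&set$ (lines~\ref{INC:resetCounter} and~\ref{INC:setswitch0}); consequently, for $\mathtt{lcounter}_p$ to reach $k^{i+1}$, process $p$ must have executed the increment step at line~\ref{INC:incrementlcounter} at least $k^{i+1}$ times since its last reset, that is, invoked at least $k^{i+1}$ instances of $\mathit{CounterIncrement}$, each no later than the $test\&set$ under consideration. This gives the claimed bound. I expect the only genuinely delicate point to be the index bookkeeping of the second paragraph --- establishing $j=i+1$ while accounting for the resumption offset $l_0$ that only raises the loop's lower limit --- since everything else is a direct reading of the pseudocode together with the monotone, power-of-$k$ behaviour of $\mathtt{limit}_p$.
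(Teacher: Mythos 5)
Your proof is correct and follows essentially the same route as the paper's own argument: it pins down $j=i+1$ from the bounds of the \textbf{for} loop at line~\ref{INC:forloop}, concludes that $\mathtt{lcounter}_p = k^{i+1}$ when the loop is entered, and then counts one invocation of $\mathit{CounterIncrement}$ per increment at line~\ref{INC:incrementlcounter}. The only difference is that you spell out bookkeeping the paper leaves implicit (the power-of-$k$ invariant for $\mathtt{limit}_p$, the $l_0$ offset, and the fact that $\mathtt{lcounter}_p$ is not reset before the $test\&set$ in question), which is fine but does not change the argument.
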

\begin{proof} Suppose that $p$ has executed a $test\&set()$ primitive to a switch$_\alpha$ with $i \cdot k +1 \leq \alpha \leq (i+1)\cdot k$  in the execution of a $\mathit{CounterIncrement()}$ operation $op$. According to line \ref{INC:forloop}, $j$ was equal to $i+1$ 
when computed at line \ref{INC:lcounter}, meaning that $\mathtt{lcounter}_p$ was equal to $k^{i+1}$. The claim holds because  $\mathtt{lcounter}_p$ is incremented only at line \ref{INC:incrementlcounter}, that is once for each $\mathit{CounterIncrement()}$ operation performed by $p$.
\end{proof}

\begin{lemma}[Amortized complexity]\label{lemma:unboundedcounter-complexity}
For $k \geq \sqrt{n}$, the amortized complexity of  Algorithm \ref{algo:unboundedcounter} is constant.
 \end{lemma}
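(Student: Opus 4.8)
The plan is to bound the total number of steps $\sum_{op}\mathit{Nsteps}(op,E)$ by $O(|Ops(E)|)$, which immediately yields constant amortized complexity. Write $I$ for the number of $\mathit{CounterIncrement}$ instances and $R$ for the number of $\mathit{CounterRead}$ instances in $E$, so that $|Ops(E)| = I + R$. I would handle increments and reads separately, showing that the work of each class is $O(I+R)$.

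For increments, each instance performs one step at line~\ref{INC:incrementlcounter} plus whatever it does inside the \textbf{for} loop. The key combinatorial observation, essentially the monotonicity of $l_0$ already exploited in the proof of Lemma~\ref{claim:incrementalSwitch}, is that for a fixed process $p$ and a fixed interval $[qk+1,(q+1)k]$ the indices examined by the \textbf{for} loop across all of $p$'s invocations are strictly increasing: $l_0$ is reset to $1$ only when $p$ leaves the interval (multiplies $\mathtt{limit}$ by $k$), and otherwise records the position just past the last switch $p$ set. Hence $p$ applies at most $k$ test\&set primitives in each interval. If $Q_p$ is the highest interval in which $p$ makes an attempt, its total \textbf{for}-loop work is at most $1+(Q_p+1)k$, while by Lemma~\ref{lemma:perprocInc} it has performed at least $k^{Q_p+1}$ increments. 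Since $(Q_p+1)\le k^{Q_p}$ for $k\ge 2$ (which holds as $k\ge\sqrt n$, the case $n=1$ being trivial), we get $(Q_p+1)k\le k^{Q_p+1}\le N_p$, where $N_p$ is the number of increments by $p$; so $p$'s increment work is $O(N_p)$ and summing over $p$ gives $O(I)$ total.

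For reads, since $\mathtt{last}_i$ is persistent and is advanced by at least one at every iteration of the \textbf{while} loop, the total number of \textbf{while}-loop iterations summed over all of $p$'s reads is bounded by the number of distinct values $\mathtt{last}_i$ attains. These values are exactly $0,1,k,k+1,2k,\dots$, the first and last switch of each interval, and an iteration occurs only when the inspected switch equals $1$, so $\mathtt{last}_i$ cannot advance beyond the globally highest interval $Q$ containing a set switch. Thus $p$ performs $O(Q)$ iterations in total; since a helping scan costs $O(n)$ but is triggered only once per $n$ iterations, $p$'s helping work is also $O(Q)$, and each read pays an additional $O(1)$ for entry and exit. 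Summing, the total read work is $O(R) + O\big((Q+1)R'\big)$, where $R'\le\min(n,R)$ is the number of processes that read.

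The hard part will be showing $(Q+1)R' = O(I+R)$, and this is exactly where the hypothesis $k\ge\sqrt n$ is used. I would split on $Q$: when $Q\le 1$ I bound $(Q+1)R'\le 2R$; when $Q\ge 2$ I use $R'\le n\le k^2$ together with $I\ge k^{Q+1}$ (Lemma~\ref{lemma:perprocInc}) to get $(Q+1)R'\le (Q+1)k^2$, and then $(Q+1) = O(k^{Q-1})$ for $k\ge 2,\ Q\ge 2$ gives $(Q+1)k^2 = O(k^{Q+1}) = O(I)$. Intuitively, reaching interval $Q$ forces $\Omega(k^{Q+1})=\Omega(n^{(Q+1)/2})$ increments, which dominates the $O(nQ)$ scanning work once $Q$ is large, whereas for small $Q$ the scanning is proportional to the number of reads; the threshold $k\ge\sqrt n$ is precisely what makes these two regimes meet at constant cost. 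Combining the increment bound $O(I)$ with the read bound $O(I+R)$ yields $\sum_{op}\mathit{Nsteps}(op,E)=O(I+R)=O(|Ops(E)|)$, proving the claim.
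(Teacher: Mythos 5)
Your proposal is correct and follows essentially the same route as the paper's proof: the same decomposition into increment work and read work, the same use of Lemma~\ref{lemma:perprocInc} to charge the at-most-$k$ test\&set attempts per interval against the $k^{q+1}$ increments needed to reach that interval, the same observation that persistence of $\mathtt{last}_i$ and first/last-switch scanning (with helping amortized once per $n$ iterations) bounds each reader's lifetime work by $O(Q)$, and the same exploitation of $n \le k^2$ in a two-case analysis on the highest interval reached (your split at $Q\le 1$ versus $Q\ge 2$ mirrors the paper's $i=0$ versus $i\ge 1$). The only differences are presentational — you bound total work by $O(|Ops(E)|)$ directly rather than manipulating the amortized ratio, and you account explicitly for the $O(1)$ entry/exit cost of each read, which the paper elides — so no substantive gap exists.
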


 \begin{proof}
Let $E$ be a finite execution of the unbounded $k$-multiplicative-accurate counter object implemented in Algorithm \ref{algo:unboundedcounter}. Let $r$ denote the number of $\mathit{CounterRead()}$ instances in $E$ and $s$ be the number of $\mathit{CounterIncrement()}$ instances in $E$. We additionally denote $Ops_W(E)$ the set of $\mathit{CounterIncrement()}$ operations that execute at least one step in $E$, and $Ops_R(E)$ the set of $\mathit{CounterRead()}$ operations in $E$. We want to compute
\begin{equation*}
	\begin{aligned}
		AmtSteps(E) &= \frac{\sum_{ op \in Ops_W(E)\cup Ops_R(E)} Nsteps(op,E)}{r+s}
\end{aligned}	
\end{equation*}
where $Nsteps(op,E)$ is the number of steps executed by $op$ in $E$.

Let $Ops_{W_p}(E)$ denote the $\mathit{CounterIncrement()}$ operations in $Ops_{W}(E)$ executed by process $p$ and $s_p$ denote the total number of $\mathit{CounterIncrement()}$ operations executed by process $p$. Let $\alpha_p$ be the index of the furthest switch accessed by a process $p$ when executing any of the $\mathit{CounterIncrement()}$ operations in $Ops_{W_p}(E)$. We have that $i_p\cdot k +1 \leq \alpha_p \leq (i_p+1)\cdot k$ for some integer $i_p\geq 0$ (the case where $\alpha_p = 0$ is trivial). 

In the worst case, process $p$ applies a $test\&set()$ primitive to switch$_h$ for every $h \in [0,\ldots \alpha_p]$ and one additional step to write into $H[p]$ (line \ref{INC:writepair}) each time $p$ successfully set one of those switches. On the other hand, by Lemma \ref{lemma:perprocInc} if process $p$ applies a $test\&set()$ primitive to the switch$_{\alpha_p}$, then it has performed at least $k^{i_{p+1}}$ $\mathit{CounterIncrement()}$ operations. Therefore,
\begin{equation*}
\begin{aligned}
	\sum_{\substack{\displaystyle op \in Ops_{W_p}(E)}} Nsteps(op) &\leq 2 \cdot (i_p+1)k+1 \\
\text{And } 	s_p \geq k^{i_p+1}
\end{aligned}
\end{equation*}
\noindent  Thus, the total number of steps executed by the set of all processes $\mathcal{P}$ in order to perform the $\mathit{CounterIncrement()}$ operations in $E$ is : 
\begin{equation*}
\begin{aligned}
	\sum_{\substack{op \in Ops_W(E)}} Nsteps(op) &= \sum_{\substack{\displaystyle p \in \mathcal{P}}} \sum_{\substack{\displaystyle op \in Ops_{W_p}(E)}} Nsteps(op) \\
	& \leq \sum_{\substack{\displaystyle p \in \mathcal{P}}} 2 \cdot (i_p+1)k+1
\end{aligned}
\end{equation*}
	\begin{equation*}
\text{And }	s = \sum_{\substack{\displaystyle p \in \mathcal{P}}} s_p \geq \sum_{\substack{\displaystyle p \in \mathcal{P}}} k^{i_p+1}
\end{equation*}
\noindent Now we consider the number of steps applied by each process to perform $\mathit{CounterRead}$ operations. Let $\alpha$ be the index of the furthest switch set to $1$ by any process in $\mathcal{P}$. If $\alpha=0$ then the claim follows. Then suppose 
$i\cdot k +1 \leq \alpha \leq (i+1)\cdot k$ for some integer $i\geq 0$. For any sequence of switches with the index in $[j\cdot k +1,\ldots, (j+1)\cdot k]$ with $0\leq j \leq i$ a process $p$ only reads the first and the last switch in such interval (i.e.,  $switch_{j\cdot k +1}$ and $switch_{(j+1)\cdot k}$). This is because at the beginning $last_p$ is equal to $0$ and it is incremented by $1$ if it is a multiple of $k$ (at line \ref{READ:incrementlast}), by $k-1$ otherwise (line \ref{READ:incrementlast2}). Also, $last_p$ is a persistent variable, thus a process $p$ reads a given switch that has been set to $1$ at most once. This implies that the total number (in all its $\mathit{CounterRead}$ operations) of read primitives applied by a process $p$ to the switches is less or equal to $2(i+2)$ (2 per each of the $i+1$ intervals, plus $switch_0$ and $switch_{\alpha+1}$). 
Furthermore, any $\mathit{CounterRead()}$ operation executes $O(n)$ steps of the \textbf{for} loop at line \ref{forloop1} or line \ref{forloop2} once every $n$ iterations of the \textbf{while} loop (when the condition of line \ref{READ:conditionhelping} is satisfied). This means that the total number of steps executed by a process $p$ when performing its  $\mathit{CounterRead()}$ operations is less or equal to $4(i+2)$. Thus,
	\begin{equation*}
\begin{aligned}
	\sum_{\substack{\displaystyle op \in Ops_R(E)}} Nsteps(op) &\leq \sum_{\substack{\displaystyle p \in \mathcal{P}_r}} 4(i +2)
	& \leq 4(i +2)\cdot n_r
\end{aligned}
\end{equation*}

    \noindent where $\mathcal{P}_r$ is the set of processes who have invoked at least one $\mathit{CounterRead()}$ operation and $n_r$ is the cardinality of $\mathcal{P}_r$. Consider $n_r>0$, the other case is trivial.  Therefore:
		\begin{equation*}
		AmtSteps(E) \leq \frac{\sum_{\substack{\displaystyle p \in \mathcal{P}}} 2(i_p+1)k +1}{\sum_{\substack{\displaystyle p \in \mathcal{P}}} k^{i_p+1} + r} + \frac{4(i +2)\cdot n_r}{s + r}
		\end{equation*}		
	\noindent Furthermore, by lemma \ref{lemma:perprocInc} the minimum number of instances of the $\mathit{CounterIncrement()}$ operation executed to set the switch $\alpha$ is $k^{i+1}$. Thus,
		\begin{equation*}
		\begin{aligned}
		AmtSteps(E) & \leq \frac{\sum_{\substack{\displaystyle p \in \mathcal{P}}} 2(i_p+1)+ \frac{1}{k}}{\sum_{\substack{\displaystyle p \in \mathcal{P}}} k^{i_p} + \frac{r}{k}} + \frac{4(i +2)\cdot n_r}{k^{i +1}+ r}
		\end{aligned}
		\end{equation*}
		\noindent We have $k^x \geq  x+1 $ for $k\geq e$ and $\forall x \in \mathbf{R}$, it follows:
		\begin{equation*}\label{eq:totalsteps}
		\begin{aligned}
		\frac{\sum_{\substack{\displaystyle p \in \mathcal{P}}} 2(i_p+1)+ \frac{1}{k}}{\sum_{\substack{\displaystyle p \in \mathcal{P}}} k^{i_p } + \frac{r}{k}} &\leq \frac{\sum_{\substack{\displaystyle p \in \mathcal{P}}} 2(i_p+1)+ \frac{1}{k}}{\sum_{\substack{\displaystyle p \in \mathcal{P}}} (i_p +1)}
		\end{aligned}
		\end{equation*}
 \noindent If $i=0$, and since $r \geq n_r$ we have:	
		\begin{equation*}
		\begin{aligned}
		\frac{4(i+2)\cdot n_r}{k^{i +1}+ r} &\leq \frac{8\cdot n_r}{k+r} & \leq 8 
		\end{aligned}
		\end{equation*}
\noindent If  $i \geq 1$, because $n_r \leq n$ and $k^{i+1} \geq i\cdot k^2$ we have:
		\begin{equation*}
		\begin{aligned}
		\frac{4(i+2)\cdot n_r}{k^{i+1} + r} &\leq \frac{4(i+2)\cdot n}{i\cdot k^2 + r}
		\end{aligned}
		\end{equation*}

Resulting in an amortized complexity of $O(1)$ for $k \geq \sqrt n$.
 \end{proof}

From Lemma \ref{lemma:unboundedcounter-waitfreedom}, \ref{lemma:unboundedcounter-linearizability} and \ref{lemma:unboundedcounter-complexity} we conclude:
 \begin{theorem}\label{theorem:unboundedcounter}
 Algorithm \ref{algo:unboundedcounter} is a wait-free linearizable implementation of a $k$-multiplicative-accurate unbounded counter with a constant amortized complexity for $k \geq \sqrt{n}$.
 \end{theorem}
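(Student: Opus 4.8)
The plan is to observe that the theorem is exactly the conjunction of the three properties it names --- wait-freedom, linearizability against the $k$-multiplicative-accurate sequential specification, and constant amortized step complexity for $k\ge\sqrt n$ --- and that each of these has already been isolated in a dedicated lemma. I would therefore prove the statement by treating the three requirements in turn and combining them at the end.

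First, for wait-freedom (Lemma~\ref{lemma:unboundedcounter-waitfreedom}), I would bound the number of steps of each operation type. A $\mathit{CounterIncrement}$ instance touches at most the $k$ switches of a single interval and contains no other loop or recursive call, so it always terminates. For $\mathit{CounterRead}$, the only risk is that the \textbf{while} loop of line~\ref{READ:whilecondition} never finds a switch equal to $0$; I would argue that this requires some process to set switches infinitely often, in which case its sequence number reported in $H$ grows without bound, so the helping test of line~\ref{READ:help} eventually fires and the read returns through the helping mechanism.

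The hard part is linearizability together with the accuracy guarantee (Lemma~\ref{lemma:unboundedcounter-linearizability}). Here I would first fix an explicit linearization order: place each successful increment at the step where it sets its switch (these are totally ordered and real-time-consistent by Lemma~\ref{claim:incrementalSwitch}); place each read either at its witnessing read of a set switch or, in the helped case, immediately after the increment that set the relevant switch (using Lemma~\ref{lemma:HelpingRead}); and insert each non-announcing increment just before its first real-time successor. I would then check that this order respects real-time precedence and, crucially, that the value $v_{op}=\mathit{ReturnValue(p,q)}$ returned by any read lies within a factor $k$ of the number $u$ of increments linearized before it. This reduces to the two-sided estimate $u_{\min}\le u\le u_{\max}$ of Claim~\ref{claim:min-maxINC} together with the inequalities $u_{\max}/k\le v_{op}\le k\,u_{\min}$; the left inequality is precisely where the hypothesis $k\ge\sqrt n$ is consumed, since $u_{\max}$ carries an additive $n(k^{q+1}-1)$ term coming from the $n$ processes' unannounced local counters, and this term must be absorbed by one extra factor of $k$.

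Finally, for the amortized bound (Lemma~\ref{lemma:unboundedcounter-complexity}), I would charge steps against increments. By Lemma~\ref{lemma:perprocInc}, a process that reaches the switch interval indexed around $i_p\cdot k$ has already performed at least $k^{i_p+1}$ increments, while it spends only $O((i_p+1)k)$ steps there; reads contribute only $O(i+2)$ steps per participating process because $\mathtt{last}$ is persistent and each read inspects only the first and last switch of each interval. Summing over processes and using that $k^{x}\ge x+1$ for $k\ge e$, together with $k\ge\sqrt n$ to control the read contribution, bounds the ratio by $O(1)$. Combining the three lemmas then yields the theorem; I expect the accuracy argument under $k\ge\sqrt n$ to be the only genuinely delicate step.
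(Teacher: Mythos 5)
Your proposal is correct and follows essentially the same route as the paper: the theorem is obtained there by simply combining Lemma~\ref{lemma:unboundedcounter-waitfreedom}, Lemma~\ref{lemma:unboundedcounter-linearizability}, and Lemma~\ref{lemma:unboundedcounter-complexity}, and your sketches of each (helping mechanism for wait-freedom, the switch-based linearization order with the $u_{\min}/u_{\max}$ estimate of Claim~\ref{claim:min-maxINC} consuming $k\ge\sqrt{n}$, and the charging argument via Lemma~\ref{lemma:perprocInc}) match the paper's arguments. No gaps to report.
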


\remove{
\subsection{Lower bound on the Amortized Step Complexity for k-multiplicative-accurate Counters}
In \cite{AttiyaH10}, for a counter object defined by the operation \texttt{fetch\&increment}, Theorem 9 \cite{AttiyaH10} states that an n-process solo-terminating implementation \textit{A} of a counter counter object that supports only read, write and either reading or regular conditional primitives of arity k or less, has an execution $E$ that contains $\Omega (nlog_{k+1}(n))$ events, in which every process performs a single \texttt{fetch\&increment} instance. This result holds for the sequential specification of a counter that defines both a $\mathit{CounterRead}$ and a $\mathit{CounterIncrement}$ operation. The proof of the theorem is only slightly altered when defining the execution $E$ containing a single call to \texttt{fetch\&increment} from each process $p$. This execution becomes a call to an instance of \texttt{CounterIncrememnt} followed by an instance $\mathit{CounterRead}$ from each process.

In the case of a k-multiplicative-accurate counter object, the lower bound is not an immediate derivation of the lower bound of a regular counter object. \cite{AttiyaH10} introduces the notion of awareness to formalize the extent to which processes are aware of the participation of other processes in the execution. Therefore, a process $p$ is aware of another process $q$, if there is information flow from $q$ to $p$: meaning that $p$ reads a shared memory value that was written by $q$ (Definition 3 \cite{AttiyaH10}). Let $AW(E,p)$ denote the awareness set of $p$ after the execution $E$ which is the set of all processes $p$ is aware of after $E$.

\begin{lemma}
Let $E$ be an execution of a solo-terminating k-multiplicative-accurate counter object such as each process executes one instance of the $\mathit{CounterRead}$ operation followed by one instance of the $\mathit{CounterIncrement}$ operation. If the $\mathit{CounterRead}$ instance by $p$ returns $i$ in $E$ then $|AW(E,p)| > i/k^2$
\end{lemma}
\begin{proof}
We follow a similar proof structure to Lemma 5 \cite{AttiyaH10}.

We assume there is an execution $E$ and an instance of $\mathit{CounterRead}$ invoked by $p$ returning $i$ and $|AW(E,p)| \leq i/k$.
We construct a new execution $E'$ as follow: for any process $q \notin AW(E,p)$, we remove all the events of $q$ from $E$, then for any process $q'$ we remove all the events by $q'$ that are aware of $q$. We want to prove that $E'$ is an execution and that it is indistinguishable from $E$.
Let $e_q'$ be an event by process $q'$ in $E'$, we write $E' =E'_1 e_q' E'_2$. Since $e_q'$ is also in $E$, we also write $E =E_1 e_q' E_2$. By induction, we assume that $E'_1$ is an execution and that it is indistinguishable to every process that appears in it from $E_1$. Particularly, $q'$ does not distinguish between $E'_1$ and $E_1$ and takes the same steps after both of them. $q'$ would return a different response in $e_q'$ after $E'_1$ than after $E_1$ only if in $E$, $e_q'$ is aware of an event $e$ that was removed from $E_1$. Which happens only if $e$ is aware of some process $q \notin AW(E,p)$, meaning that in $E$, $e_q'$ is also aware of $q$, contradicting the fact that $e_q'$ was not removed from $E'$. Hence $E'_1 e_q'$ is an execution and $q'$ does not distinguish between $E'_1 e_q'$ and $E_1 e_q'$.

The instance of $\mathit{CounterRead}$ by $p$ in $E$ returns $i$ such as $v/k \leq i \leq v \cdot k$ with $v$ the exact number of $\mathit{CounterIncrement}$ instances before it. Therefore, the instance $\mathit{CounterRead}$ by $p$ in $E'$ also returns the same value $i$ since the two executions are indistinguishable. let $E"$ be the extension of $E'$ in which the processes that participate in $E'$ complete their operation instances one at a time by solo-termination. The most $\mathit{CounterIncrement}$ instances completed in $E"$ is $v$, and we have that $p$ returns $i$ when invoking $\mathit{CounterRead}$ before invoking $\mathit{CounterIncrement}$ since the execution $E$ is constructed by a single call of $\mathit{CounterRead}$ followed by a call to $\mathit{CounterIncrement}$ from each process; we also have that $v/k \leq i \leq v \cdot k$, meaning that there exists an interval $[u/k,k \cdot u]$ with $u \in [0,v-1]$ that is skipped by the return values to the $\mathit{CounterRead}$ instances, therefore, these instances do not constitute a contiguous range starting from $0$ which violates the specification of the counter.

\end{proof}
Similar to Corollary 6 \cite{AttiyaH10}, we are able to construct an n-process execution $E$ in which every process executes one instance of the $\mathit{CounterRead}$ operation followed by one instance of the $\mathit{CounterIncrement}$ operation. The return values for the $\mathit{CounterRead}$ instances in the worst case are $1/k,\dots,n/k$, meaning that for each process $p$ from Lemma 11, we have $|AW(E,p)| > i/k^2$ where $i$ is number of $\mathit{CounterIncrement}$ instances linearized before it.

\begin{corollary}
Let $E$ be a quiescent n-process execution of a solo-terminating k-multiplicative-accurate counter implementation, then $\sum_{p \in P} AW(E,p) \geq (n+1) \cdot (n+2)/(2 \cdot k^2)$
\end{corollary}

The familiarity set $F(E,o)$ denotes the set containing all processes that the process $o$ has a record of after $E$. Let $M(E) = max_{p,o} \{|AW(E,p)| / p\in P\} \cup \{|F(E,o)| / o \in B\}$ denote the maximum size of process awareness set and object familiarity after $E$ (Definition 7 \cite{AttiyaH10}). Let $P$ be a set of synchronization primitives. $P$ is \textit{c-bounded} for some constant $c$, if for every execution $E$ and for every set $S$ of events that are enabled after $E$, applying primitives from $P$, there is a schedule $\sigma$ of $S$ such that $M(E \sigma)/M(E) \leq c$. We have the set of primitives that contains write and all the conditional primitives of arity $q$ or less is $(2q +1)$-bounded (Lemma 7 \cite{AttiyaH10}).

\begin{lemma}
Let $A$ be an $n$-process solo-terminating implementation of a $k$-multiplicative-accurate counter from base objects that support only primitives from a $c$-bounded set $P$. Then $A$ has an execution $E$ that contains $\Omega (nlog_c(n))$ events in which every process performs a single $\mathit{CounterRead}$ instance followed by a single $\mathit{CounterIncrement}$ instance.
\end{lemma}
\begin{proof}
We construct an $n$-process execution $E$, with $\Omega (nlog_c(n))$ events, in which every process performs a single $\mathit{CounterRead}$ instance followed by a single $\mathit{CounterIncrement}$ instance, following the proof of format of Lemma 8 \cite{AttiyaH10}. Let $r \in \Omega (log_c(n))$ be the index of the induction rounds, maintaining the following invariant: before round $i$ starts, the size of the awareness set of any process and the size of the familiarity set of any base object is at most $c^{i-1}$.

If a process $p$ has not completed its $\mathit{CounterRead}$ and $\mathit{CounterIncrement}$ instances before round $i$ starts, we say that $p$ is active in round $i$. All processes are active in round $1$. All the processes that are active in round $i$ have an enabled event in the beginning of round $i$. We denote the set of these events by $S_i$. We denote the execution that consists of all the events issued in rounds $1,\dots,i$ by $E_i$. We also let $E_0$ denote the empty execution.

For the induction base, before the execution starts, objects have no record of processes and processes are only aware of themselves. Thus $M(E_0) = 1$ holds. For the induction step, assume that $M(E_{i_1}) \leq c^{i-1}$ holds. Since $P$ is $c$-bounded, there is an ordering $\sigma_i$ of the events of $S_i$ such that $M(E_{i-1} \sigma_i) \leq cM(E_{i-1}) \leq c^i$. We let $E_i = E_{i-1} \sigma_i$.

From Corollary 1, the awareness set of at least $n/3$ processes must contain at least $n/4k^2$ other processes after $E$ with the condition that $1 \leq n/4k^2 \leq n$, meaning that $k \leq \sqrt{n}/2$. Therefore, each of these processes is active in at least the first $log_c(n/4k^2 -1)$ rounds, performing at least $log_c(n/4k^2 -1)$ events in $E$.
\end{proof}
From this Lemma we deduce the step complexity lower bound.
\begin{theorem}
Let $A$ be an $n$-process solo-terminating implementation of a $k$-multiplicative-accurate counter from base objects that support only read, write and either reading or regular conditional primitives of arity $q$ or less. Then $A$ has an execution $E$ that contains $\Omega (nlog_{q+1}(n/k^2))$ events for $k \leq \sqrt{n}/2$, in which every process performs a single $\mathit{CounterRead}$ instance followed by a single $\mathit{CounterIncrement}$ instance.
\end{theorem}
}



\newcommand{\pair}[1]{\mbox{$\langle #1\rangle$}}
\newcommand{\kCAS}{$k$-CAS}
\newcommand{\jCAS}{$j$-CAS}
\newcommand{\calC}{{\cal C}}

\subsection{An Amortized Step Complexity Lower bound for $k$-multiplicative accurate Counters}
\label{app:amortized-lb}
In this section, we prove that the amortized step complexity of solo-terminating
implementations of $k$-multiplicative accurate counters is $\Omega(\log_{q+1} \frac{n}{k^2})$ for $k \leq \sqrt{n/2}$, assuming the implementation uses base objects that support only read, write and either reading or regular conditional primitives of arity $q$ or less. A primitive has arity $q$ if it is applied atomically to a vector of $q$ base objects. In all the paper but this section, we consider $q=1$. Definitions and most of the technical lemmata hold from \cite{AttiyaH10}. In the following we provide the main idea of the lower bound and the Lemma that most differs from the original work. To this aim, in the following we remember some of the definitions formalized in \cite{AttiyaH10}.

Processes communicate with one another by issuing events that
apply \emph{read-modify-write} (RMW) primitives to vectors of base
objects. We assume that a primitive is always applied to vectors
of the same size and that all the base objects to
which a primitive is applied are over the same domain. 
A RMW primitive, applied to a vector of $k$ base objects over some
domain $D$, is characterized by a pair of functions, $\pair{g,h}$,
where $g$ is the primitive's \emph{update function} and $h$ is the
primitive's \emph{response function}. The update function $g: D^k
\times W \rightarrow D^k$, for some input-values domain $W$,
determines how the primitive updates the values of the base
objects to which it is applied.

Let $e$ be an event, issued by process $p$ after execution $E$, which applies the primitive
$\pair{g,h}$ to a vector of base objects $\pair{o_1,\ldots,o_k}$.
Then $e$ atomically does the following: it updates the values of
objects $o_1,\ldots,o_k$ to the values of the components of the
vector $g(\pair{v_1,\ldots,v_k},w)$, respectively, where
$\overrightarrow{v}=\pair{v_1,\ldots,v_k}$ is the vector of values
of the base objects after $E$, and $w \in W$ is an input parameter
to the primitive. We call $\overrightarrow{v}$ the
\emph{object-values vector} of $e$ after $E$. The RMW primitive
returns a response value, $h(\overrightarrow{v},w)$, to process
$p$. If $W$ is empty, we say that the primitive \emph{takes no
input}.


Next, we revise the concept of conditional synchronization primitives.
\begin{definition}
\label{def:conditionals}
A RMW primitive $\pair{g,h}$ is \emph{conditional} if,
for every possible input $w$,
$\Big{\vert} \big{\{} \overrightarrow{v} \vert g(\overrightarrow{v},w)
 \neq \overrightarrow{v} \big{\}} \Big{\vert} \leq 1$.
Let $e$ be an event that applies the primitive $\pair{g,h}$
with input $w$.
The \emph{change point} of $e$ is the unique vector $\overrightarrow{c_w}$
such that $g(\overrightarrow{c_w},w) \neq \overrightarrow{c_w}$;
any other vector is a \emph{fixed point} of $e$.
\end{definition}

For example, \kCAS{} is a conditional primitive for any integer $k \geq 1$. The single change point of a \kCAS{} event
with input $\pair{old_1,\ldots,old_k,new_1,\ldots,new_k}$ is the
vector $\pair{old_1,\ldots,old_k}$. Read is also a conditional
primitive, since read events have no change points.

%
%

A RMW event is invisible if its object-values vector is a fixed point of the event when it is issued.
A RMW event $e$ that is applied by process $p$ to an object vector is also invisible if, before $p$ applies another event, a write event is applied to each object $o_i$ that is changed by $e$ before another RMW event is applied to $o_i$.
If a RMW event $e$ is not invisible in an execution $E$ on some object $o$, we say that \emph{$e$ is visible in $E$ on $o$}. If $e$ is not invisible in $E$, we say that $e$ is a \emph{visible} event in $E$.

\paragraph*{Lower bound}
The key intuitions behind the following lower bound is that
first, in any $n$-process execution of a $k$-multiplicative accurate counter implementation,
`many' processes need to be aware of the participation of `many' other
processes in the execution, and second, if processes only use read,
write and conditional primitives, then a scheduling adversary can
order events so that information about the participation of
processes in the computation accumulates `slowly'. 

We formalize the first key intuition by proving Lemma \ref{newLemma5}. The rest is straightforward from the lower bound in the original work \cite{AttiyaH10}. 

The next definition captures the extent to which processes are aware of the
participation of other processes in an execution. Intuitively, a process $p$ is aware of the participation of another
process $q$ if $p$ reads a shared-memory value that was either directly
written by $q$ or indirectly influenced by a value written by $q$.
The following definitions formalize this notion.

\begin{definition}
\label{def:knows-about}
Let $e_q$ be an event by process $q$ in an execution $E$,
which applies a non-trivial primitive to a vector $v$ of base objects.
We say that an event $e_p$ in $E$ by process $p$ is \emph{aware} of $e_q$
if $e_p$ accesses a base object $o$
such that at least one of the following holds:
\begin{itemize}
\item
There is a prefix $E'$ of $E$ such that $e_q$ is visible on $o$ in $E'$
and $e_p$ is a RMW event that applies a primitive other than write to $o$,
and it follows $e_q$ in $E'$, or
\item there is an event $e_r$ that is aware of $e_q$ in $E$
and $e_p$ is aware of $e_r$ in $E$.
\end{itemize}

If an event $e_p$ of process $p$ is aware
of an event $e_q$ of process $q$ in $E$,
we say that $p$ is \emph{aware} of $e_q$
and that $e_p$ is \emph{aware} of $q$ in $E$.
\end{definition}

The following definition quantifies the extent to which a process
is aware of the participation of other processes in an execution.

\begin{definition}
\label{definition:process-familiarity} Process $p$ is \emph{aware}
of process $q$ after an execution $E$ if either $p=q$ or $p$ is
aware of an event of $q$ in $E$.
The \emph{awareness set} of $p$ after $E$, denoted $AW(E,p)$,
is the set of processes that $p$ is aware of after $E$.
\end{definition}

The following lemma proves a relation between the value returned by a
\emph{CounterRead} operation instance of a process in some execution
and the size of that process' awareness set after that execution. This is our main technical contribution to the lower bound.


\begin{lemma}\label{newLemma5}
Let $E$ be an execution of a solo-terminating $k$-multiplicative accurate counter object implementation where each process executes one instance of the $\mathit{CounterIncrement()}$ operation followed by one instance of the $\mathit{CounterRead()}$ operation. If the $\mathit{CounterRead()}$ instance by a process $p$ returns $i$ in $E$ then $|AW(E,p)| \geq \frac{i}{k}$.
\end{lemma}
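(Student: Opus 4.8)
The plan is to argue by contradiction using the indistinguishability technique of Attiya and Hendler. Suppose, toward a contradiction, that $|AW(E,p)| < i/k$. I would build a reduced execution $E'$ in which only processes that $p$ is aware of take steps, show that $p$ still obtains the value $i$ from its $\mathit{CounterRead}()$ in $E'$, and then combine solo-termination with the $k$-multiplicative specification to bound from below the number of increments that can be linearized before $p$'s read, contradicting the assumption.

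First I would construct $E'$ from $E$ by deleting every event issued by a process $q \notin AW(E,p)$, and then, for every remaining process $q'$, deleting every event of $q'$ that is aware of some already-deleted process. The heart of the argument is to prove, by induction on the length of the prefix, that $E'$ is a legal execution and that it is indistinguishable from $E$ to every process whose events survive. Writing $E' = E'_1\, e_{q'}\, E'_2$ and $E = E_1\, e_{q'}\, E_2$ for a common surviving event $e_{q'}$, the inductive hypothesis gives that $q'$ cannot distinguish $E'_1$ from $E_1$, so it issues the same event $e_{q'}$; crucially, its response is also unchanged, because a response could differ only if, in $E$, $e_{q'}$ were aware of a deleted event, and by Definition~\ref{def:knows-about} (using the transitivity clause) this would make $e_{q'}$ aware of a process outside $AW(E,p)$, contradicting the fact that $e_{q'}$ was retained. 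This inductive step is the part I expect to be the main obstacle, since it requires tracking carefully how awareness propagates along chains of events. Since $p$ is aware only of processes in $AW(E,p)$, none of $p$'s events — in particular its $\mathit{CounterRead}()$ — can be aware of a deleted process, so they all survive and the read still returns $i$ in $E'$.

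Finally, I would extend $E'$ to a complete execution $E''$ by letting each participating process finish its pending operation while running solo, which is possible because the implementation is solo-terminating and which introduces no new participating process. Every process performing a step in $E''$ belongs to $AW(E,p)$, and each process executes exactly one $\mathit{CounterIncrement}()$, so the total number of increments in $E''$ is at most $|AW(E,p)|$; hence the number $v$ of increments linearized before $p$'s read satisfies $v \le |AW(E,p)|$. On the other hand, linearizability of $E''$ together with the $k$-multiplicative specification forces $i \le v\cdot k$, that is $v \ge i/k$. Combining these yields $i/k \le v \le |AW(E,p)| < i/k$, a contradiction, and the lemma follows.
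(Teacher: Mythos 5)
Your proposal is correct and follows essentially the same route as the paper's own proof: the same reduced execution $E'$ obtained by deleting events of processes outside $AW(E,p)$ and events aware of them, the same indistinguishability induction, and the same solo-terminating extension $E''$ leading to the contradiction $i/k \le v \le |AW(E,p)| < i/k$ (the paper phrases it as $i \le k\cdot v < i$). No gaps to report.
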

\begin{proof}
Assume, by way of contradiction, that there is an execution $E$ where each process executes one instance of the $\mathit{CounterIncrement()}$ operation followed by one instance of the $\mathit{CounterRead()}$ operation, and a process $p$ such that a $\mathit{CounterRead()}$ instance by $p$, namely $op$, returns $i$ and $|AW(E,p)|< \frac{i}{k}$.

We construct a new execution $E'$ as follows: for any process $q \notin AW(E,p)$, we first remove all the events of $q$ from $E$; then, for any process $q'$, we remove all the events of $q'$ that are aware of $q$. Note that if an event $e_{q'}$ of $q'$ is aware of $q$, then all following events by $q'$ are also aware of $q$ and are removed. Also, no events of $p$ are removed since $p$ is aware only of processes in $AW(E,p)$.

We prove that $E'$ is an execution, and that it is indistinguishable from $E$. We consider events in the order they appear in $E'$. Let $e_q'$ be an event by process $q'$ that appears in $E'$, namely $E' =E'_1 e_q' E'_2$. Since $e_q'$ is also in $E$, we can also write $E =E_1 e_q' E_2$. For the induction, assume that $E'_1$ is an execution and that it is indistinguishable to every process that appears in it from $E_1$. In particular, $q'$ does not distinguish between $E'_1$ and $E_1$ and takes the same step after both of them. 
To see why $q'$  obtains the same response in $e_q'$ after $E'_1$ and after $E_1$, note that it can return a different response only if in $E$, $e_q'$ is aware of an event $e$ that was removed from $E_1$. This happens only if $e$ is aware of some process $q \notin AW(E,p)$, meaning that in $E$, $e_q'$ is also aware of $q$, contradicting the fact that $e_q'$ was not removed. Hence $E'_1 e_q'$ is an execution and $q'$ does not distinguish between $E'_1 e_q'$ and $E_1 e_q'$.

This implies that the $\mathit{CounterRead()}$ instance by $p$ returns $i$ also in $E'$; on the other hand, less than $\frac{i}{k}$ processes participate in $E'$. Let $E"$ be the extension of $E'$ in which the processes that participate in $E'$ complete their operation instances, one at a time. This execution exists by solo-termination, and results in a quiescent execution. However, less than $\frac{i}{k}$ instances of $\mathit{CounterIncrement()}$ operations completed in $E"$, and we have that $p$ returns $i$ when invoking $op$. Thus, the response of the $op$ is not linearizable. In particular, consider any linearization $\emph{L}$ of $E"$ and let $v$ be the number of  $\mathit{CounterIncrement()}$ instances linearized before $op$ in $\emph{L}$, we have that $\frac{v}{k} \leq i \leq k\cdot v < k\cdot \frac{i}{k} = i$.
\end{proof}

\begin{corollary}\label{corollary:sizeAwarenessSet}
Let $E$ be a quiescent $n$-process execution of a solo-terminating $k$-multiplicative counter implementation, where each process executes one instance of the $\mathit{CounterIncrement()}$ operation followed by one instance of a  $\mathit{CounterRead()}$ operation. Then,  the awareness sets of  $\frac{n}{2}$ processes contain at least $\frac{n}{2k^2}$ other processes after $E$.
\end{corollary}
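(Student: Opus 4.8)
The plan is to prove the statement for an \emph{arbitrary} quiescent execution $E$ of the prescribed form, using Lemma~\ref{newLemma5} as a black box together with a counting argument on any linearization of $E$. First I would fix a linearization $L$ of $E$, which exists since the implementation is linearizable and $E$ is quiescent. For each process $p$, let $v_p$ denote the number of $\mathit{CounterIncrement()}$ instances linearized before $p$'s $\mathit{CounterRead()}$ in $L$, and let $i_p$ be the value that read returns. Because $E$ has the increment-then-read structure, every process' increment precedes its own read in real time, and hence in $L$.

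The core is a purely combinatorial claim about $L$: order the $n$ reads by their position in $L$ as $R_{(1)},\ldots,R_{(n)}$, and I claim that at least $j$ increments are linearized before $R_{(j)}$. Indeed, the reads $R_{(1)},\ldots,R_{(j)}$ belong to $j$ distinct processes, and for each of them the corresponding process' increment precedes it in $L$; since all of these reads occur at or before $R_{(j)}$ in $L$, all $j$ of these distinct increments precede $R_{(j)}$. Consequently $v_p \geq j$ for the process $p$ owning $R_{(j)}$. Taking the $n/2$ reads that appear latest in $L$ (those with index $j \geq n/2$), I obtain at least $n/2$ processes $p$ with $v_p \geq n/2$.

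For each such process $p$, the $k$-multiplicative accuracy required of any valid linearization gives $i_p \geq v_p/k \geq n/(2k)$. Applying Lemma~\ref{newLemma5} then yields $|AW(E,p)| \geq i_p/k \geq n/(2k^2)$. Since this holds for all of the $n/2$ latest-reading processes, the awareness sets of at least $n/2$ processes have size at least $n/(2k^2)$; discounting the process itself (an additive constant of $1$, immaterial under the standing hypothesis $k \leq \sqrt{n/2}$, which guarantees $n/(2k^2) \geq 1$) gives the stated bound on the number of \emph{other} processes.

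I expect the main obstacle to be the combinatorial claim: one must argue that it is the increment-then-read ordering alone, rather than any property of the algorithm, that forces at least $j$ increments before the $j$-th read, and then check that this lower bound on $v_p$ transfers to a lower bound on the returned value $i_p$ through the accuracy specification before Lemma~\ref{newLemma5} can be invoked. The only other delicate point is the off-by-one between the size of $AW(E,p)$ and the count of \emph{other} processes, which I would absorb into the loose constant.
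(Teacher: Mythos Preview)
Your proposal is correct and follows essentially the same approach as the paper: fix a linearization, argue that the $j$-th \texttt{CounterRead} in the linearization is preceded by at least $j$ increments (via the increment-then-read structure), deduce its return value is at least $j/k$, apply Lemma~\ref{newLemma5} to get $|AW(E,p)|\geq j/k^2$, and restrict to the last $n/2$ reads. You actually spell out the combinatorial claim and the ``other processes'' off-by-one more carefully than the paper's two-line proof does.
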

\begin{proof}
Let $\emph{L}$ denote any linearization of $E$, and let $op$ be the $i$-th $\mathit{CounterRead()}$ instance in $\emph{L}$. Since $op$ is the $i$-th instance of $\mathit{CounterRead()}$ in $\emph{L}$, it returns $v$ such that $v \geq \frac{i}{k}$. By considering the last $\frac{n}{2}$ processes linearized and by Lemma \ref{newLemma5}, the claim follows.
\end{proof}

Information about processes that participate in an execution is transferred through base objects. To complete the proof, we need to show that each of the $\frac{n}{2}$ processes has to apply $\Omega (\log_{q+1}(n/k^2))$ events to build its awareness set (proved in the Appendix).

The following theorem formalize our step complexity lower bound.

\begin{theorem}\label{theorem:amortized-step-complexity}
Let $A$ be an $n$-process solo-terminating implementation of a $k$-multiplicative counter from base objects that support only read, write and either reading or regular conditional primitives of arity $q$ or less. Then $A$ has an execution $E$ that contains $\Omega (n \log_{q+1}(n/k^2))$ events for $k \leq \sqrt{n/2}$, in which every process performs a single $\mathit{CounterIncrement()}$ instance and a single $\mathit{CounterRead()}$ instance.
\end{theorem}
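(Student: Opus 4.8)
The plan is to combine the awareness-set lower bound already established in Corollary~\ref{corollary:sizeAwarenessSet} with the adversarial scheduling argument of Attiya and Hendler~\cite{AttiyaH10}. The new, object-specific ingredient (the relation between a returned value and the size of the reader's awareness set) is already discharged by Lemma~\ref{newLemma5}, so the only remaining work is to show that, under the restricted primitive set, information about process participation can be forced to accumulate slowly, and hence that a process whose awareness set is large must have taken many steps. To this end I would adopt the potential measure $M(E) = \max\bigl(\{|AW(E,p)| : p \in \mathcal{P}\} \cup \{|F(E,o)| : o \text{ a base object}\}\bigr)$, where $F(E,o)$ denotes the set of processes that object $o$ has a record of after $E$, and invoke the fact from~\cite{AttiyaH10} that the set of read, write, and conditional primitives of arity at most $q$ is $c$-bounded with $c = q+1$: for any execution $E$ and any set of enabled events there is a schedule $\sigma$ of those events with $M(E\sigma) \le c \cdot M(E)$.

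Next I would construct the required execution $E$ in rounds, exactly as in the corresponding lemma of~\cite{AttiyaH10}. Every process begins \emph{active} and must execute one $\mathit{CounterIncrement}()$ followed by one $\mathit{CounterRead}()$; a process is active in round $i$ if it has not yet completed both instances, and by solo-termination each active process has a (unique) enabled event at the start of the round. I would maintain the invariant that $M \le c^{\,i-1}$ before round $i$; the base case $M(E_0)=1$ holds since initially each process is aware only of itself and each object has a record of no process. Using $c$-boundedness, I order the enabled events of round $i$ into a schedule $\sigma_i$ with $M(E_{i-1}\sigma_i) \le c \cdot M(E_{i-1}) \le c^{\,i}$, and set $E_i = E_{i-1}\sigma_i$. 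Letting all remaining processes complete their operations one at a time yields a quiescent execution.

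Finally I would count events. By Corollary~\ref{corollary:sizeAwarenessSet}, in the quiescent execution at least $\tfrac{n}{2}$ processes have awareness sets of size at least $\tfrac{n}{2k^2}$; the hypothesis $k \le \sqrt{n/2}$ guarantees $\tfrac{n}{2k^2} \ge 1$, so this is a nontrivial requirement. By the round invariant, after the first $i$ rounds no process is aware of more than $c^{\,i}$ others, so a process whose final awareness set has size at least $\tfrac{n}{2k^2}$ must have remained active for at least $\log_c\bigl(\tfrac{n}{2k^2}\bigr) = \Omega\bigl(\log_{q+1}(n/k^2)\bigr)$ rounds, and therefore applied at least that many events. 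Summing over these $\tfrac{n}{2}$ processes gives a total of $\Omega\bigl(n \log_{q+1}(n/k^2)\bigr)$ events, which is the claimed bound.

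I expect no essential difficulty here: the genuinely novel step is Lemma~\ref{newLemma5}, and the round argument is purely the adversarial-scheduling bookkeeping of~\cite{AttiyaH10}. The main thing to verify is that the $c$-bounded scheduling transfers verbatim, which it does because that machinery depends only on the primitive set and not on the object being implemented; the sole object-specific input is Corollary~\ref{corollary:sizeAwarenessSet}. The only point demanding care is matching the boundedness constant to $c=q+1$ so that the base of the logarithm in the final bound is exactly $q+1$ (any constant-factor discrepancy in $c$ is absorbed into the $\Omega$ notation).
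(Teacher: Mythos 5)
Your proposal is correct and takes essentially the same route as the paper: the object-specific content is exactly Lemma~\ref{newLemma5} and Corollary~\ref{corollary:sizeAwarenessSet}, and the rest is the round-based adversarial scheduling of \cite{AttiyaH10} using the potential $M(E)$ and $c$-boundedness, which is precisely what the paper defers to the appendix. The one imprecision is that \cite{AttiyaH10} shows this primitive set is $(2q+1)$-bounded rather than $(q+1)$-bounded, but as you observe, this only changes the logarithm base by a constant factor and is absorbed by the $\Omega$ notation.
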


\section{A $k$-multiplicative-Accurate Bounded Max Register}
\label{sec:bounded-maxreg}

In this section, we present an implementation of a $k$-multiplicative-accurate max register. The algorithm is wait-free, and has asymptotically optimal worst-case step complexity. Indeed, we present in the next section a matching lower bound.

The key idea of our algorithm is to consider the $k$-base representation of values written to the register and have \texttt{Write} operations store only the index of the bit preceding (i.e., to the left of) the most significant bit (MSB) of their arguments. These indices are stored in an (accurate) $\big((\lfloor \log_k (m-1) \rfloor)+1\big)$-bounded max register implemented in a wait-free manner \cite{AspnesAC2012}. A $Read$ operation $R$ reads the value $p$ of the accurate max register. If it equals 0 (implying that it was not written to yet), $R$ returns 0. Otherwise, $p$ is the largest index written so far to the accurate max register and $R$ returns $k^{p}$. The pseudocode is presented by Algorithm \ref{algo:k-multiplicative-max-reg}.

\remove{
The main idea of the algorithm consists of representing the values written to the max register in a k base (i.e. $v = (b_{p-1}....b_1b_0)_k$ with $b_i \in \{0,...,(k-1)\}$). Consequently, the k-multiplicative approximation of a certain value $v$ is $k^{p}$ where $(p-1)$ is the most significant bit in the representation of $v$ in the $k$ base. Therefore, in Algorithm 1, writing a value is equivalent to storing the most significant bit $p$, and because the register is m-bounded, the maximum value $p$ can take is $log_k(m)$. Hence the need of a $log_k(m)$-bounded max register \textit{Max\_reg} to store $p$. The \texttt{Read()} instance of the max register object consists simply of reading the index $p$ stored in \textit{Max\_reg} and returning $k^{p+1}$ if $p$ is positive, or $0$ otherwise.
}

\begin{algorithm}[htbp]
  \SetAlgoLined
  \SetKwProg{Fn}{Function}{}{end}
  \textbf{Shared variables} $M$: $\big((\lfloor log_k(m-1) \rfloor)+1\big)$-bounded max register initially $0$\\
  \begin{multicols}{2}
\Fn{Read()}{
  $p \longleftarrow \text{M}.read()$ \label{MaxRead-MRead}\\
  \lIf{p=0} {return 0\label{MaxRead-return0}}
  \lElse{return $k^{p}$} }
{}
\columnbreak
\Fn{Write(v)}{$p \longleftarrow \lfloor log_k v \rfloor$ + 1; \label{MaxWrite-TakeLog}\\
  M.$write$($p$); \label{MaxWrite-MWrite}}{}
\end{multicols}
\caption{A $k$-multiplicative-accurate $m$-bounded max register}
\label{algo:k-multiplicative-max-reg}
\end{algorithm}

We now prove that Algorithm \ref{algo:k-multiplicative-max-reg} is a correct wait-free implementation of a $k$-multiplicative-accurate max register.

\begin{obs}\label{lemma:k-multiplicative-max-register-wait-freedom}
Algorithm \ref{algo:k-multiplicative-max-reg} is a wait-free implementation of a k-multiplicative-accurate m-bounded max register.
\end{obs}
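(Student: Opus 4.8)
The plan is to reduce wait-freedom of Algorithm~\ref{algo:k-multiplicative-max-reg} to the wait-freedom of the underlying accurate bounded max register $M$, which is implemented in a wait-free manner in \cite{AspnesAC2012}. The key observation is that each operation of the implementation performs only a constant amount of local computation together with a single invocation of an operation on $M$, and contains no loops.

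First, I would check that every operation is well-defined, and in particular that the argument passed to $M.write$ never exceeds the bound of $M$. A $\mathit{Write}(v)$ operation is invoked with some $v \in \{1,\ldots,m-1\}$, computes $p = \lfloor \log_k v \rfloor + 1$ locally (line~\ref{MaxWrite-TakeLog}), and then invokes $M.write(p)$ (line~\ref{MaxWrite-MWrite}). Since $v \leq m-1$, we have $p = \lfloor \log_k v \rfloor + 1 \leq \lfloor \log_k(m-1)\rfloor + 1$, which is exactly the bound of the accurate max register $M$. Hence the value written to $M$ always lies within its admissible range, so the call is legal.

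Next, I would note that a $\mathit{Read}$ operation consists of a single invocation $M.read()$ (line~\ref{MaxRead-MRead}) followed by a constant-time local computation that compares the returned value $p$ to $0$ and returns either $0$ (line~\ref{MaxRead-return0}) or $k^{p}$; symmetrically, a $\mathit{Write}$ operation consists of a constant-time local computation followed by the single invocation $M.write(p)$. In both routines the step count of the invoking process is therefore a constant plus the step count of a single operation on $M$.

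Finally, since $M$ is implemented in a wait-free manner \cite{AspnesAC2012}, every invocation of $M.read()$ and $M.write(\cdot)$ returns after the invoking process performs a bounded number of steps. Combined with the constant local overhead of each operation, this shows that every $\mathit{Read}$ and $\mathit{Write}$ of Algorithm~\ref{algo:k-multiplicative-max-reg} completes within a bounded number of steps by the invoking process, which establishes wait-freedom. I expect no real obstacle here; the only point requiring care is the range check on the argument of $M.write$, which follows directly from the choice of $M$ as a $\big(\lfloor \log_k(m-1)\rfloor + 1\big)$-bounded max register. The $k$-multiplicative accuracy of the returned values is a separate correctness claim that I would defer to a subsequent linearizability argument.
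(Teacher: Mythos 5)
Your proof is correct and takes essentially the same route as the paper, which simply observes that wait-freedom follows directly from the wait-freedom of the underlying max register of \cite{AspnesAC2012}; your version just spells out the reduction (one call to $M$ plus constant local work, and the range check on the argument of $M.write$) that the paper leaves implicit.
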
\label{Algo:boundedmaxreg}
\begin{proof}
Follows directly from the wait-freedom of the max register algorithm of \cite{AspnesAC2012}.
\end{proof}

\begin{lemma}\label{lemma:k-multiplicative-max-register-linearizability}
Algorithm \ref{algo:k-multiplicative-max-reg} is a linearizable implementation of a k-multiplicative-accurate m-bounded max register.
\end{lemma}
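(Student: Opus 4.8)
The plan is to reduce linearizability of Algorithm~\ref{algo:k-multiplicative-max-reg} to the linearizability of the underlying accurate max register $M$ of \cite{AspnesAC2012}. The crucial structural observation is that every operation of our object performs exactly one operation on $M$: a $\mathit{Write}(v)$ consists of the single step $M.\mathit{write}(p)$ at line~\ref{MaxWrite-MWrite}, and a $\mathit{Read}()$ consists of the single step $M.\mathit{read}()$ at line~\ref{MaxRead-MRead} (followed only by local computation). Since $M$ is linearizable, fix a linearization $L_M$ of all the $M$-operations invoked in the execution. I would then \emph{induce} a linearization $L$ of the operations on our object by ordering them according to the positions of their (unique) $M$-operation in $L_M$, breaking ties by real-time order. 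Because the single $M$-operation of each object operation is issued strictly between that operation's invocation and its response, its linearization point lies inside the operation's execution interval; hence $L$ respects the real-time order of the object's operations. This handles the ``ordering'' half of linearizability.

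It then remains to show that the value returned by every $\mathit{Read}$ is consistent with the $k$-multiplicative-accurate sequential specification, given the order $L$. Let $R$ be a $\mathit{Read}$ operation, let $p$ be the value it obtains from $M$ at line~\ref{MaxRead-MRead}, and let $v$ be the maximum argument over all $\mathit{Write}$ operations linearized before $R$ in $L$ (with $v=0$ if there are none). Since $M$ is an \emph{accurate} max register, $p$ equals the maximum of the values $p'=\lfloor\log_k v'\rfloor+1$ written to $M$ by those preceding writes (and $p=0$ if there are none). Because $x\mapsto\lfloor\log_k x\rfloor$ is monotone non-decreasing, the maximum of $\lfloor\log_k v'\rfloor$ over the preceding writes equals $\lfloor\log_k v\rfloor$, so I obtain the key identity $p=\lfloor\log_k v\rfloor+1$ whenever $v\geq 1$. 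Note also that every write argument satisfies $v'\geq 1$, so $p'\geq 1$; thus $p=0$ holds if and only if no write precedes $R$, i.e.\ $v=0$, in which case $R$ returns $0$ at line~\ref{MaxRead-return0}, which is the correct (initial) value.

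For the main case $p\geq 1$, the arithmetic is routine. From $p=\lfloor\log_k v\rfloor+1$ we get $p-1\leq\log_k v<p$, i.e.\ $k^{p-1}\leq v<k^{p}$. The first inequality gives $k^{p}\leq k\cdot v$, and from $v<k^{p}$ together with $k\geq 1$ we get $v/k\leq v<k^{p}$. Hence the returned value $x=k^{p}$ satisfies $v/k\leq x\leq v\cdot k$, exactly the required approximation guarantee. Combining this with the order-preservation of $L$ establishes the lemma.

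The step I expect to be the main obstacle is the identity $p=\lfloor\log_k v\rfloor+1$: it is what ties the accuracy of $M$ (which reports the maximum of the \emph{indices} $p'$) to the maximum \emph{value} $v$ actually written to our object, and it relies precisely on the monotonicity of $\lfloor\log_k\cdot\rfloor$. I would state this monotonicity explicitly and argue it for a finite set of preceding writes. The remaining subtleties---verifying that the induced order $L$ is well defined and real-time respecting, and treating the boundary case $v=0$ where $M$ still holds its initial value $0$---are standard and follow from the one-$M$-operation-per-object-operation structure noted above.
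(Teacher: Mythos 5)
Your proposal is correct and follows essentially the same route as the paper's proof: both linearize each operation at the linearization point of the single operation it performs on the accurate max register $M$ (which lies within its execution interval, giving real-time order), and both then reduce the approximation guarantee to the arithmetic fact that the maximum preceding write $v$ satisfies $k^{p-1}\leq v< k^{p}$ while the read returns $x=k^{p}$, whence $v/k\leq x\leq vk$. The only cosmetic difference is bookkeeping: the paper explicitly removes incomplete operations that never reached their $M$-step before defining the linearization, a detail your proof leaves implicit under "standard subtleties."
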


\begin{proof}
Let $\text{M}_m^k$ denote a $k$-multiplicative-accurate $m$-bounded max register implemented by Algorithm \ref{algo:k-multiplicative-max-reg} and let $E$ be an execution of $\text{M}_m^k$. We now specify how operation instances on $\text{M}_m^k$ in $E$ are linearized. First, all the instances of \texttt{Read} that did not execute line \ref{MaxRead-MRead} in $E$ and all the instances of \texttt{Write} operations did not execute line \ref{MaxWrite-MWrite} in $E$ do not appear in the linearization. We say these are \emph{removed operations}. Note that none of the removed operations has completed in $E$. For all remaining instances, we define the linearization point of a \texttt{Read} operation on $\text{M}_m^k$ to be the linearization point of the $read$ operation it invoked on M in $E$ (in line \ref{MaxRead-MRead}) and the linearization point of a \texttt{Write} operation on $\text{M}_m^k$ as the linearization point of the $write$ operation it invokes on M (in line \ref{MaxWrite-MWrite}). Since each non-removed operation instance on $\text{M}_m^k$ in $E$ is linearized at a step it performs (hence during its execution interval), the linearization order we have define, denoted by $L$, respects the real-time order of the operation instances in $E$.

It remains to show that $L$ satisfies the sequential specification of a k-multiplicative-accurate m-bounded max register. First note that since values written to $\text{M}_m^k$ are from $\{1,...,m-1\}$ and from lines \ref{MaxWrite-TakeLog}-\ref{MaxWrite-MWrite}, only values from $\{1,...,\lfloor \log_{k} (m-1) \rfloor +1\}$ are written to M. Let $R$ denote a \texttt{Read} instance in $L$ that returns $0$ in line \ref{MaxRead-return0}. Since only positive values are ever written to M, it follows that $R$ is not preceded in $L$ by any \texttt{Write} instance, hence the value of $\text{M}_m^k$ when $R$ is linearized is its initial value $0$, so $R$ returns the exact value of $\text{M}_m^k$.

Assume, then, that $R$ \emph{is} preceded in $L$ by one or more \texttt{Write} instances and returns a positive value $x=k^p$ for some $p \geq 1$. We need to prove that $v/k \leq x \leq vk$ holds, where $v$ is the maximum value written by any \texttt{Write()} instance linearized before $R$ in $L$. Since M is linearizable and since we have linearized all non-removed instances applied to $\text{M}_m^k$ in $E$ according the order of the operations they applied to M (in line \ref{MaxRead-MRead} or in line \ref{MaxWrite-MWrite}), there exists a \texttt{Write} operation that writes some value $w$ and appears before $R$ in $L$, such that $\lfloor \log_k w \rfloor = p-1$ and $p$ is the maximum value written to M by any \texttt{Write} instance that precedes $R$ in $L$. Let $V = \{w \big{\vert} \lfloor log_k(w) \rfloor = p-1\} $ be the set of all the values written to $\text{M}_m^k$ in $L$ before $R$ whose MSB equals $p-1$. Let $v = max(V)$. It follows that $v$ is the maximum value written to $\text{M}_m^k$ by any \texttt{Write()} instance linearized in $L$ before $R$. We have $v \in [k^{p-1} , k^{p}-1]$ and $x = k^{p}$. Consequently, $v \leq x \leq v \cdot k$ and the sequential specification of the k-multiplicative m-bounded max register is satisfied.
\end{proof}

\begin{theorem}\label{lemma:k-multiplicative-max-register-complexity}
Algorithm \ref{algo:k-multiplicative-max-reg} is a wait-free linearizable implementation of a k-multiplicative-accurate m-bounded max register with worst case operation step complexity $O\Big(min\big(\log_2(\log_k m),n\big)\Big)$.
\end{theorem}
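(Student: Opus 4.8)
The plan is to reduce the complexity analysis of Algorithm~\ref{algo:k-multiplicative-max-reg} entirely to the known complexity of the underlying accurate bounded max register $M$, since wait-freedom and linearizability have already been established (Observation~\ref{lemma:k-multiplicative-max-register-wait-freedom} and Lemma~\ref{lemma:k-multiplicative-max-register-linearizability}). First I would observe that each \texttt{Read} instance performs exactly one $read$ on $M$ (line~\ref{MaxRead-MRead}) together with a constant amount of local computation (the test in line~\ref{MaxRead-return0} and the evaluation of the power $k^{p}$), and that each \texttt{Write} instance performs exactly one $write$ on $M$ (line~\ref{MaxWrite-MWrite}) together with a constant amount of local computation (the logarithm in line~\ref{MaxWrite-TakeLog}). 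Hence the worst-case step complexity of each operation on $\text{M}_m^k$ is, up to an additive constant, that of a single operation on $M$.

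Next I would bound the cost of an operation on $M$ in terms of $M$'s own bound. By construction $M$ is an accurate $M'$-bounded max register with $M' = \lfloor \log_k(m-1)\rfloor + 1$. Appealing to the bounded max register of Aspnes et al.~\cite{AspnesAC2012}, whose worst-case step complexity is $O(\log_2 M')$ in its bound parameter, together with the generic $O(n)$ max register implementation (e.g. via atomic snapshots), $M$ can be realized with worst-case step complexity $O\big(\min(\log_2 M', n)\big)$ per operation.

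Finally I would substitute the value of $M'$ and simplify: $\log_2 M' = \log_2\big(\lfloor \log_k(m-1)\rfloor + 1\big) = O(\log_2 \log_k m)$, so that each operation on $M$, and therefore each operation on $\text{M}_m^k$, has worst-case step complexity $O\big(\min(\log_2 \log_k m, n)\big)$, as claimed.

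The only delicate point is the appeal to the $O\big(\min(\log_2 M', n)\big)$ bound for the accurate max register: one must ensure the cited construction achieves the logarithmic bound in its \emph{own} number of values $M'$ rather than in $m$ directly, and that the $n$ term, which accounts for the alternative snapshot-based construction, is carried correctly through the substitution so that the final minimum is taken with $n$ and not with some function of $M'$. Everything else is a routine count of $O(1)$ local steps.
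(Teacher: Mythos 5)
Your proposal is correct and takes essentially the same approach as the paper: both arguments inherit wait-freedom and linearizability from Observation~\ref{lemma:k-multiplicative-max-register-wait-freedom} and Lemma~\ref{lemma:k-multiplicative-max-register-linearizability}, reduce each operation of Algorithm~\ref{algo:k-multiplicative-max-reg} to a single operation on the accurate $\big((\lfloor \log_k(m-1) \rfloor)+1\big)$-bounded max register of \cite{AspnesAC2012} plus $O(1)$ extra steps, and substitute that register's bound into the $O\big(\min(\log m', n)\big)$ complexity of \cite{AspnesAC2012} to obtain $O\big(\min(\log_2 \log_k m, n)\big)$. Your only deviation is presentational: you split the $\min$ into a tree-based $O(\log_2 M')$ construction plus a separate $O(n)$ fallback, whereas the paper cites the combined $O\big(\min(\log m, n)\big)$ bound directly.
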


\begin{proof}
Wait-freedom and linearizability follow from Observation \ref{lemma:k-multiplicative-max-register-wait-freedom} and Lemma \ref{lemma:k-multiplicative-max-register-linearizability}, respectively.
As for step complexity -- the worst case operation step complexity of the wait-free implementation of an $m$-bounded max register of \cite{AspnesAC2012} is $O\big(min(\log m,n)\big)$ for both $Read$ and $Write$ operations. Each operation of Algorithm \ref{algo:k-multiplicative-max-reg} applies a single operation on a $\big((\lfloor \log_k(m-1) \rfloor)+1\big)$-bounded max register and a constant number of additional steps. The theorem follows.
\end{proof}

\remove{
From Lemma \ref{lemma:k-multiplicative-max-register-wait-freedom}, \ref{lemma:k-multiplicative-max-register-linearizability} and \ref{lemma:k-multiplicative-max-register-complexity} we have
\begin{theorem}\label{theorem:k-multiplicative-max-reg}
Algorithm \ref{algo:k-multiplicative-max-reg} is a wait-free linearizable implementation of a k-multiplicative m-bounded max register with a worst case step complexity of $O(min(log_2(log_k(m)),n))$ for an instance of the operation \texttt{Read()} and $O(min(log_2(log_k(m)),n))$ for an instance of \texttt{Write($v$)}.
\end{theorem}
} 

\section{A Lower bound on the Worst Case Step Complexity of k-multiplicative m-bounded Max Register and Counter}

Aspnes et al. \cite{AspnesCAH16} proved a worst-case step complexity on the lower bound of a class of concurrent objects called \emph{L-perturbable}, that includes objects such as max registers, counters and snapshots. $L$ is called the \emph{perturbation bound}. Roughly speaking, an object is $L$-perturbable if, for every implementation of the object, there exists an operation $Op$ and an execution $E$, in the course of which $Op$ is ``perturbed'' $L$ times. An outstanding operation $Op$ by process $p$ is said to be perturbed by a process $q$, if a solo execution by $q$ can change the response of a solo execution by $p$. They prove \cite[Theorem 1]{AspnesCAH16} that any obstruction-free implementation of an $L$-perturbable object $O$ from \emph{historyless} primitives has an execution in which some process accesses $\Omega\big( min(\log_2 L, n) \big)$ distinct base objects during a single operation instance. Specifically, this implies that the worst-case step complexity of such implementations is $\Omega\big( min(\log_2 L, n) \big)$. 

\newtheorem*{def2}{[5], Definition 2}
\newtheorem*{def3}{[5], Definition 3}
\newtheorem*{theorem1}{[5], Theorem 1}

For the sake of presentation completeness, we restate the definition of an $L$-perturbable object from \cite{AspnesCAH16}.

\begin{def2}
\label{definition:L-perturbable-executions}
Let $\cal{I}$ be an obstruction-free implementation of an object. The set $S_k$ of $k$-perturbing executions with respect to an operation instance $op_n$ by process $p_n$ is defined inductively as follows:

\begin{enumerate}
\item $S_0$ is the singleton set containing the empty sequence.

\item  If $\alpha_{k-1} \lambda_{k-1}$ is in $S_{k-1}$, where $\lambda_{k-1}$ consists of $n-1$ events, one by each of the processes $p_1, \ldots, p_{n-1}$, then $\alpha_{k-1} \lambda_{k-1}$ is in $S_k$. In this case, we say that $\alpha_{k-1} \lambda_{k-1}$ is \emph{saturated}.

\item  Suppose $\alpha_{k-1} \lambda_{k-1}$ is in $S_{k-1}$, no process has more than one event in $\lambda_{k-1}$, and there is a sequence $\gamma$ of events by a process $p_l$ different from $p_n$ and the processes that have events in $\lambda_{k-1}$, such that the sequences of events by $p_n$ as it performs $op_n$ after $\alpha_{k-1} \lambda_{k-1}$ and $\alpha_{k-1} \gamma \lambda_{k-1}$ differ. Let $\gamma=\gamma' e \gamma''$, where $e$ is the first event of $\gamma$ such that the sequences of events taken by $p_n$ as it performs $op_n$ by itself after $\alpha_{k-1} \lambda_{k-1} $ and after $\alpha_{k-1} \gamma' e \lambda_{k-1}$ differ. Let $\lambda$ be some permutation of the event $e$ together with the events in $\lambda_{k-1}$, and let $\lambda'$, $\lambda''$ be any two sequences of events such that $\lambda=\lambda' \lambda''$. Then the execution $\alpha_k \lambda_k$ is in $S_k$, where $\alpha_k =\alpha_{k-1} \gamma' \lambda'$ and $\lambda_k=\lambda''$.
\end{enumerate}
\end{def2}

\begin{def3}
\label{definition:L-perturbable-objects}

An obstruction-free implementation of an object is $L$-perturbable if there is an operation instance $op_n$ such that the set $S_L$ of $L$-perturbing executions with respect to $op_n$ by $p_n$ is nonempty.
\end{def3}

An object $\cal{O}$ is \emph{perturbable} if all its obstruction-free implementations are perturbable.

\begin{theorem1}\label{theorem:step-lower-bound}
Let $A$ be an $n$-process obstruction-free implementation of an
$L$-perturbable object $\cal{O}$ from historyless primitives.
Then $A$ has an execution in which some process accesses
$\Omega(\min(\log_2{L},n))$ distinct base objects during 
a single operation instance.
\end{theorem1}


\begin{lemma}\label{proof:perturbableboundmaxreg}
A k-multiplicative-accurate m-bounded max register is 
$\Theta(log_k m)$-perturbable for $k>1$.
\end{lemma}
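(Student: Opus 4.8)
The plan is to instantiate the inductive construction from Definition 2 for the specific $k$-multiplicative-accurate $m$-bounded max register and show that the perturbation process can be repeated $\Theta(\log_k m)$ times, while also arguing it can be repeated no more than $O(\log_k m)$ times. For the lower bound on the perturbation count, I would fix the distinguished operation $op_n$ to be a $\mathit{Read}$ instance by process $p_n$, and build a sequence of perturbing executions in which each perturbing process $p_l$ performs a $\mathit{Write}$ whose argument is large enough to force $p_n$'s solo $\mathit{Read}$ to return a strictly larger value. The key quantitative observation is that consecutive return values of $\mathit{Read}$ must differ by at least a factor determined by the approximation: because a legal response to a read with true maximum $v$ lies in $[v/k, vk]$, two writes can produce distinguishable read responses only when their values are separated by more than a factor of $k^2$. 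Hence to ``walk'' the read response from its smallest nonzero value up to its maximum $k$-multiplicative value near $m$, one needs a chain of roughly $\log_{k^2} m = \Theta(\log_k m)$ successive perturbations, each contributed by a fresh process.

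Concretely, I would proceed as follows. First I would describe the base execution $\alpha_0\lambda_0$ (the empty sequence) and verify the induction invariant: after $\alpha_{j}\lambda_{j}$, the solo $\mathit{Read}$ by $p_n$ returns a value $r_j$ that is a valid $k$-multiplicative approximation of the maximum written so far. Second, for the inductive step I would exhibit the perturbing sequence $\gamma$: let $p_l$ (a process not yet used and distinct from $p_n$) run a $\mathit{Write}(w_j)$ solo, choosing $w_j$ so that the true maximum jumps by more than a factor $k^2$ relative to the previous maximum, which guarantees that every legal response to $p_n$'s $\mathit{Read}$ after the write is strictly larger than $r_j$, forcing $p_n$'s solo read schedule to differ and thereby identifying the pivotal event $e$ inside $\gamma$ as required by clause 3 of Definition 2. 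Since we have $m$ distinct representable magnitudes and each perturbation consumes a factor of more than $k^2$, the number of times we can perform this step before exhausting the range $\{1,\dots,m-1\}$ is $\Omega(\log_{k^2} m) = \Omega(\log_k m)$, giving nonemptiness of $S_L$ for $L = \Theta(\log_k m)$.

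For the matching upper bound on perturbability I would argue that no implementation can be perturbed more than $O(\log_k m)$ times with respect to a read: each successful perturbation strictly increases the value returned by $p_n$'s solo read, and by linearizability each such returned value is a $k$-multiplicative approximation of some written value in $\{1,\dots,m-1\}$; since successive distinguishable responses must differ by more than a factor $k$ (indeed the gap is forced to exceed $k^2$ to be certain of distinguishability), there can be at most $O(\log_k m)$ distinct response values, capping the length of any perturbing chain. Combining the two bounds yields the $\Theta(\log_k m)$ claim.

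The main obstacle I anticipate is the careful bookkeeping required by clause 3 of Definition 2: I must verify that the chosen write $w_j$ genuinely changes the \emph{sequence of steps} taken by $p_n$ during its solo read (not merely its final return value), so that the pivotal event $e$ is well-defined, and that the saturation and fresh-process constraints of clauses 2 and 3 are respected throughout the induction so that each of the $\Omega(\log_k m)$ perturbations is charged to a distinct process. Pinning down the exact factor ($k$ versus $k^2$) that separates distinguishable read responses — and checking it is consistent in both the lower and upper bound directions — is the delicate quantitative point that determines the constant hidden in $\Theta(\log_k m)$.
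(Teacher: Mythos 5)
Your proposal follows essentially the same route as the paper's proof: fix a $\mathit{Read}$ instance $op_n$, inductively extend each non-saturated $(r-1)$-perturbing execution with a fresh process writing a value exceeding $k^2$ times the previous maximum (the paper takes $v_r = k^2 v_{r-1}+1$), so that every legal response after the write strictly exceeds every legal response before it, and conclude that the $m$-boundedness caps the number of rounds at $\Theta(\log_{k^2} m)=\Theta(\log_k m)$. The only divergence is your separate ``upper bound on perturbability'' argument, which is unnecessary (the paper reads the $\Theta$ as the cap forced on its own construction by $v_r \leq m-1$) and rests on the unjustified claim that every perturbation must increase the read's return value, but this does not affect the correctness of your main construction.
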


\begin{proof}
Let \textit{O} be a $k$-multiplicative-accurate $m$-bounded max register and consider an obstruction-free implementation of \textit{O}. We show that \textit{O} is $(\frac{1}{2} log_k(m-1))$-perturbable for a \texttt{Read()} operation instance $op_n$ by process $p_n$. We proceed by induction where the base case for $r=0$ is immediate.
Let $r < \frac{1}{2}log_k(m-1)$ and let $\alpha_{r-1} \lambda_{r-1}$ be an $(r-1)$-perturbing execution of \textit{O}. If $\alpha_{r-1} \lambda_{r-1}$ is saturated, then it is also an $r$-perturbing execution. Otherwise, denote by $v_{r-1}$ the maximum input to the \texttt{write()} operations linearized before $op_n$ in the execution sequence $\alpha_{r-1} \lambda_{r-1}$. Since $\alpha_{r-1} \lambda_{r-1}$ is not saturated, there exists a process $p_l \neq p_n$ that does not take steps in $\lambda_{r-1}$. Let $\gamma$ be the execution fragment by $p_l$ where it finishes any incomplete operation in $\alpha$ and then performs a \texttt{write()} operation to the max register with the value $v_{r} = k^2v_{r-1} +1$. Then $op_n$ must return a value $x$ such that $kv_{r-1} < v_{r}/k \leq x \leq kv_{r}$ when run after $\alpha_{r-1} \gamma \lambda_{r-1}$ . It follows that an $r$-perturbing execution can be constructed from $\alpha_{r-1} \lambda_{r-1}$ and $\gamma$ as specified by [5], Definition 2. Because \textit{O} is an m-bounded max register, during the $r$th step of the induction, the value written to the max register must satisfy $v_{r} \leq m -1$. Consequently it suffices to have:
\begin{equation*}
v_{r} \leq (k+1)^{2r} \leq m -1 \\
\implies r \leq \frac{1}{2}log_{k+1}(m -1)= \Theta(log_{k} m)
\end{equation*}
\end{proof}

from Lemma \ref{proof:perturbableboundmaxreg} and [5], Theorem 1 we have the following theorem:
\begin{theorem}
\label{theorem:approx-max-lower-bound}
The worst-case step complexity of a $k$-multiplicative $m$-bounded max register is $\Omega\big( min(\log_2 (\log_k m), n) \big)$
\end{theorem}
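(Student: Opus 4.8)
The plan is to prove that a $k$-multiplicative-accurate $m$-bounded max register is $\Theta(\log_k m)$-perturbable, and then immediately invoke the known step-complexity lower bound for $L$-perturbable objects (Theorem~1 of~\cite{AspnesCAH16}) with $L = \Theta(\log_k m)$, which yields the claimed $\Omega(\min(\log_2(\log_k m), n))$ bound. The substantive work is entirely in establishing the perturbation bound, so I would structure the argument around constructing, by induction on $r$, an $r$-perturbing execution with respect to a fixed \texttt{Read()} instance $op_n$ by process $p_n$, exactly following the inductive scheme of Definition~2.

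For the inductive step, I would take an $(r-1)$-perturbing execution $\alpha_{r-1}\lambda_{r-1}$. If it is already saturated (each of the other $n-1$ processes has an event in $\lambda_{r-1}$), it is trivially $r$-perturbing by clause~2 of Definition~2. Otherwise, I would exhibit a perturbing process: pick some $p_\ell \neq p_n$ absent from $\lambda_{r-1}$, and let $\gamma$ be the fragment in which $p_\ell$ first completes any pending operation and then performs a \texttt{Write} of a value $v_r$ strictly larger than what any current writer could force $op_n$ to return. The key inequality is choosing $v_r$ large enough that the \emph{smallest} legal response after the new write, namely $v_r/k$, strictly exceeds the \emph{largest} legal response before it, namely $k\,v_{r-1}$, where $v_{r-1}$ is the maximum value written before $op_n$ in $\alpha_{r-1}\lambda_{r-1}$. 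Setting $v_r = k^2 v_{r-1} + 1$ guarantees $v_r/k = k\,v_{r-1} + 1/k > k\,v_{r-1}$, so $p_n$'s solo response after $\alpha_{r-1}\gamma\lambda_{r-1}$ must differ from its response after $\alpha_{r-1}\lambda_{r-1}$ — this is precisely the condition in clause~3 that lets me splice in the first distinguishing event of $\gamma$ and build the $r$-perturbing execution.

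The main obstacle — and the reason the multiplicative slack forces a \emph{geometric} growth of the written values rather than the additive growth used in the exact max-register case — is controlling how fast $v_r$ blows up against the $m$-bound. Since each round multiplies the witnessed value by roughly $k^2$, after $r$ rounds the written value is on the order of $(k^2)^r \approx (k+1)^{2r}$, and the $m$-boundedness constraint $v_r \le m-1$ caps the number of rounds at $r \le \tfrac{1}{2}\log_{k+1}(m-1) = \Theta(\log_k m)$. I would make this counting explicit to conclude the object is $\Theta(\log_k m)$-perturbable; everything else is bookkeeping within the Definition~2 framework. The final theorem then follows by substituting $L = \Theta(\log_k m)$ into Theorem~1, since $\log_2 L = \Theta(\log_2(\log_k m))$.
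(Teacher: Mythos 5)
Your proposal is correct and follows essentially the same route as the paper's proof: the same induction on $r$-perturbing executions per Definition~2 of \cite{AspnesCAH16}, the same handling of the saturated case, the identical choice of perturbing write value $v_r = k^2 v_{r-1} + 1$ so that the minimum legal response $v_r/k$ exceeds the maximum prior legal response $k\,v_{r-1}$, and the same counting argument bounding the number of rounds by $\tfrac{1}{2}\log_{k+1}(m-1) = \Theta(\log_k m)$ before invoking Theorem~1 of \cite{AspnesCAH16}. Nothing of substance differs.
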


\begin{lemma}\label{proof:perturbableboundedcounter}
A k-multiplicative-accurate m-bounded counter is $\Theta(log_k(m))$-perturbable for $k>1$.
\end{lemma}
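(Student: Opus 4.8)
The plan is to mirror the proof of Lemma~\ref{proof:perturbableboundmaxreg} for the max register, replacing the single \texttt{write()} that jumps the stored value by a factor $k^2$ with a \emph{batch} of $\mathit{CounterIncrement}$ operations that raises the linearized count by the same factor. I would fix a $\mathit{CounterRead}$ instance $op_n$ by $p_n$ and prove by induction on $r$ that the implementation admits an $r$-perturbing execution with respect to $op_n$ for every $r \leq \frac{1}{2}\log_{k+1} m$. The base case $r=0$ is the empty execution, exactly as in [5], Definition~2.

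For the inductive step I would take an $(r-1)$-perturbing execution $\alpha_{r-1}\lambda_{r-1}$; if it is saturated it is already $r$-perturbing, so assume otherwise and let $v_{r-1}$ be the number of $\mathit{CounterIncrement}$ instances linearized before $op_n$ in $\alpha_{r-1}\lambda_{r-1}$. Since the execution is not saturated, some process $p_l \neq p_n$ takes no step in $\lambda_{r-1}$. I would let $\gamma$ be the fragment in which $p_l$ first finishes any operation it left incomplete in $\alpha_{r-1}$ and then performs enough $\mathit{CounterIncrement}$ instances to bring the count linearized before $op_n$ up to $v_r = k^2 v_{r-1}+1$. Run solo after $\alpha_{r-1}\lambda_{r-1}$, $op_n$ returns some $x$ with $v_{r-1}/k \leq x \leq k v_{r-1}$; run after $\alpha_{r-1}\gamma\lambda_{r-1}$ the count is $v_r > k^2 v_{r-1}$, so any returned value $x'$ satisfies $x' \geq v_r/k > k v_{r-1} \geq x$. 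Hence $op_n$'s solo response, and therefore its sequence of events, must differ, and [5], Definition~2 extracts a first tipping event $e$ of $\gamma$ together with an $r$-perturbing execution $\alpha_r\lambda_r$.

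I would then close the recursion. Since $v_r = k^2 v_{r-1}+1 \leq (k+1)^2 v_{r-1}$ for $v_{r-1}\geq 1$, the total number of increments after $r$ rounds is at most $(k+1)^{2r}$, and $m$-boundedness forces $(k+1)^{2r} \leq m$, i.e.\ $r \leq \frac{1}{2}\log_{k+1} m = \Theta(\log_k m)$; this same inequality supplies the matching upper bound on the perturbation count. Feeding $L = \Theta(\log_k m)$ into [5], Theorem~1 then yields the worst-case lower bound $\Omega\big(\min(\log_2(\log_k m),n)\big)$, as stated for the counter.

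The main obstacle I anticipate is the accounting that is trivial for the max register but delicate here: a single increment changes the count by only one, so the perturbing process must issue $\Theta(k^2 v_{r-1})$ distinct $\mathit{CounterIncrement}$ operations, and I must argue both that (i) a \emph{single} tipping event $e$ among them genuinely flips $op_n$'s response --- which holds because the solo response is at most $k v_{r-1}$ before the batch and strictly greater than $k v_{r-1}$ after it, so it must change at some intermediate event --- and that (ii) folding the entire prefix $\gamma'$ of increments preceding $e$ into $\alpha_r$ leaves exactly one event of $p_l$ in $\lambda_r$. This last point is what preserves the invariants of [5], Definition~2 (at most one event per process in $\lambda$, and a fresh perturbing process each round), while the cumulative nature of the count, plus the at most $n$ increments left pending in $\lambda_r$, keeps the total number of increments within the $m$-bound without affecting the asymptotics.
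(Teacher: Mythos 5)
Your proposal is correct and takes essentially the same route as the paper's proof: the same induction over perturbing executions from [5], Definition 2, with the round-$r$ perturbing process issuing a batch of $\mathit{CounterIncrement}$ instances sized so that the guaranteed completed count exceeds $k^2$ times the largest count the read could previously have seen, and the same $m$-boundedness argument yielding $r = \Theta(\log_k m)$. The only difference is bookkeeping: the paper tracks per-round batch sizes $I_r$ via the explicit recursion $I_r = (k^2-1)\sum_{j<r} I_j + r$ (the $+r$ term handling the at most $r-1$ pending increments), whereas you track the cumulative target $v_r = k^2 v_{r-1}+1$, which is equivalent up to the same asymptotics.
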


\begin{proof}
Let \textit{O} be a $k$-multiplicative $m$-bounded counter and consider an obstruction-free implementation of \textit{O}. We show that \textit{O} is $(\frac{1}{2} log_k(m-1))$-perturbable for a $\mathit{CounterRead()}$ operation instance $op_n$ by the process $p_n$. We proceed by induction where the base case for $r=0$ is immediate.
Let $\alpha_{r-1} \lambda_{r-1}$ be an $(r-1)$-perturbing execution of \textit{O}. If $\alpha_{r-1} \lambda_{r-1}$ is saturated, then it is also an $r$-perturbing execution. Otherwise, let $I_r$ denote the number of $\mathit{CounterIncrement()}$ operation instances performed by the perturbing process in iteration $r$. We have that $I_1 = 1$ in order for $op_n$ to return a value greater than $0$. For $r>1$, if $op_n$ runs after $a_{r-1} \lambda_{r-1}$ it can return a value that is as large as $k \cdot \sum_{j=1} ^{r-1} I_j$. Therefore, we need the number of complete $\mathit{CounterIncrement()}$ operation instances after $a_{r-1} \gamma \lambda_{r-1}$ to be at least $k^2 \cdot \sum_{j=1} ^{r-1} I_j +1$ for $op_n$ to return a value greater than $k \cdot \sum_{j=1} ^{r-1} I_j$.  

Besides the $\mathit{CounterIncrement()}$ operation instances in $\gamma$, at least  $\sum_{j=1} ^{r-1} I_j - (r-1)$ have finished, therefore setting $I_r= (k^2-1) \cdot \sum_{j=1} ^{r-1} I_j +r$ implies that $op_n$ returns at least 
$\frac{1}{k} (\sum_{j=1} ^{r-1} I_j -(r-1)+I_r)=\frac{1}{k} (\sum_{j=1} ^{r-1} I_j - (r-1)+(k^2-1) \cdot \sum_{j=1} ^{r-1} I_j +r)=\frac{1}{k} (k^2\cdot \sum_{j=1} ^{r-1} I_j +1)$ which is greater tha $k \cdot \sum_{j=1} ^{r-1} I_j$ as needed.

\begin{equation*}
\begin{aligned}
I_r &= \sum_{i=0} ^{r-1}(r-i)(k^2-1)^i =\sum_{i=1} ^{r}i \cdot (k^2-1)^{r-i} \\
&=(k^2-1)^r \sum_{i=1} ^{r} \frac{i}{(k^2-1)^{i}} \\
&=\frac{(k^2-1)((k^2-1)^r -1)+r(2-k^2)}{(k^2-2)^{2}}\leq k^{2r} \leq m \\ 
& \implies r \leq \frac{1}{2}log_{k}(m)= \Theta(log_{k} m)
\end{aligned}
\end{equation*}
\end{proof}

From Lemma \ref{proof:perturbableboundedcounter} and [5], Theorem 1, we prove the following Theorem

\begin{theorem}
\label{theorem:approx-counter-lower-bound}
The worst-case step complexity of a $k$-multiplicative $m$-bounded counter is $\Omega\big( min(\log_2 (log_k m), n) \big)$
\end{theorem}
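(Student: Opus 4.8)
The plan is to obtain the theorem as an immediate consequence of the perturbation bound established in Lemma \ref{proof:perturbableboundedcounter}, combined with the generic lower bound of \cite[Theorem 1]{AspnesCAH16} (restated above as [5], Theorem 1). Since the substantive combinatorial work has already been carried out when constructing the perturbing executions in Lemma \ref{proof:perturbableboundedcounter}, the theorem reduces to plugging the right perturbation bound $L$ into the generic statement.

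First I would invoke Lemma \ref{proof:perturbableboundedcounter}: for every $k > 1$, an arbitrary obstruction-free implementation of a $k$-multiplicative-accurate $m$-bounded counter is $\Theta(\log_k m)$-perturbable with respect to a $\mathit{CounterRead()}$ instance $op_n$ by $p_n$. Because this holds for \emph{every} obstruction-free implementation, the object itself is perturbable in the sense of [5], Definition 3, with perturbation bound $L = \Theta(\log_k m)$. Next I would apply \cite[Theorem 1]{AspnesCAH16} to an arbitrary $n$-process obstruction-free implementation $A$ of the counter from historyless primitives, instantiated with this $L$. The cited theorem then guarantees an execution of $A$ in which some process accesses $\Omega(\min(\log_2 L, n))$ distinct base objects during a single operation instance, and hence performs at least that many steps, giving the worst-case step-complexity lower bound.

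It only remains to simplify the argument of the inner logarithm. Writing $L = c\,\log_k m$ for the hidden constant $c$ of the $\Theta$, we have $\log_2 L = \log_2(\log_k m) + \log_2 c = \Theta\big(\log_2(\log_k m)\big)$, the additive constant being absorbed as $\log_k m \to \infty$; this is precisely where the double logarithm makes the constant factor in the perturbation bound irrelevant. Substituting yields the claimed bound $\Omega\big(\min(\log_2(\log_k m), n)\big)$.

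I do not expect any genuine obstacle at the level of this theorem, since the difficult step — designing, for each induction round $r$, a perturbing $\mathit{CounterIncrement()}$ schedule that forces $op_n$ to change its response while keeping the total number of increments below $m$ — is entirely localized in Lemma \ref{proof:perturbableboundedcounter}. The only point requiring mild care here is the asymptotic passage from the multiplicative $\Theta$ inside the perturbation bound to its harmless additive effect under the outer logarithm, as noted above.
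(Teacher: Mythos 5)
Your proposal is correct and follows exactly the paper's route: the paper proves this theorem precisely by combining Lemma \ref{proof:perturbableboundedcounter} with the restated Theorem 1 of \cite{AspnesCAH16}, with no further argument. Your added remark about absorbing the hidden constant of the $\Theta$ under the outer $\log_2$ is a harmless (and correct) elaboration of a detail the paper leaves implicit.
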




\section{Discussion}
We have presented upper and lower bounds on the step complexity of a variant of deterministic approximate counters and max-registers. 
We have proved the possibly counter-intuitive\footnote{Pun unintended.} result that when the accuracy parameter $k$ does not depend on $n$, relaxing counter semantics by allowing inaccuracy of a multiplicative factor cannot asymptotically reduce the step complexity of unbounded counters by more than a logarithmic factor. Then, we present a wait-free linearizable $k$-multiplicative-accurate counter for $k\geq \sqrt{n}$ with constant amortized step complexity. 

The maximum improvement in the worst case step complexity of the bounded variant of $k$-multiplicative-accurate counters remains an open question. Also, when $k$ is constant, it is unclear whether there exists a deterministic wait-free $k$-multiplicative-accurate counter implementation with $o(\log^2 n)$ amortized step complexity.

We also show that relaxing the semantics of max registers by allowing inaccuracy of even a constant multiplicative factor yields an exponential improvement in the worst-case step complexity of the bounded variant and in the amortized step complexity of the unbounded one.  
Overall we provide theoretical evidence that worst-case time complexity does not indicate benefit from some common relaxation of unbounded counters and max-register while average complexity does. It is interesting to note that a similar result has been proved for relaxed queues in the message passing systems by Talmage and Welch  \cite{TalmageW14}.

\section*{Acknowledgment}
Alessia Milani and Corentin Travers are supported by ANR projects Descartes and FREDDA. Adnane Khattabi is supported by UMI Relax. Danny Hendler is supported in part by ISF grant 380/18.



%

\end{document}